\def\E{\mathbb{E}}
\def\T{\mathcal{T}}
\newtheorem{theorem}{Theorem}
\newtheorem{lemma}{Lemma}
\begin{document}

\title{Fair Coexistence of Scheduled and Random Access Wireless Networks: Unlicensed LTE/WiFi}

\author{Cristina Cano$^1$, Douglas J. Leith$^2$, Andres Garcia-Saavedra$^3$, Pablo Serrano$^4$\\
\small
$^1$Inria Lille-Nord Europe, France,
$^2$Trinity College Dublin, Ireland,
$^3$NEC Labs Europe, Germany,
\\
$^4$Universidad Carlos III, Madrid.

}

\maketitle

\IEEEpeerreviewmaketitle
\begin{abstract}

We study the fair coexistence of scheduled and random access transmitters sharing the same frequency channel.
Interest in coexistence is topical due to the need for emerging unlicensed LTE technologies to coexist fairly with WiFi. However, this interest is not confined to LTE/WiFi as coexistence is likely to become increasingly commonplace in IoT networks and beyond 5G.  In this article we show that mixing scheduled and random access incurs and inherent throughput/delay cost, the cost of heterogeneity.  We derive the joint proportional fair rate allocation, which casts useful light on current LTE/WiFi discussions.  We present experimental results on inter-technology detection and consider the impact of imperfect carrier sensing.

\end{abstract}

\begin{IEEEkeywords}
Coexistence, spectrum sharing, unlicensed LTE, LTE-U, LAA-LTE, WiFi, CSAT, LBT, LBE, proportional fairness.
\end{IEEEkeywords}

\section{Introduction}\label{sec:introduction}


In this paper we study the fair coexistence of scheduled and random access transmitters in the same frequency band.  Scheduled approaches transmit at regular instants of time (\emph{slot/frame/subframe} boundaries) whereas random access methods use carrier sensing to divide time up into variable-size slots.  We focus on the resulting MAC layer interactions and on joint MAC design for coexistence.  Our main contributions are the following: \emph{(i)} we show that mixing scheduled and random access incurs an inherent throughput/delay cost, which we refer to as the cost of heterogeneity, \emph{(ii)} we develop a joint throughput model for scheduled and random access transmitters  sharing the same band, \emph{(iii)} we derive the joint proportional fair rate allocation and \emph{(iv)} we present experimental measurements demonstrating the impact of imperfect carrier sensing by random access transmitters and show that our analytic results can be extended to encompass this.




While fair coexistence of scheduled and random access transmitters is of fundamental interest, it is particularly topical due to the 
current interest in operating LTE in unlicensed bands where WiFi is already widely deployed.    Regulators require mobile cellular operators to show that LTE, which is a scheduled protocol, can coexist in a \emph{fair} way with existing WiFi networks, which use random access~\cite{flore2014slides}.  In this context traditional power control solutions are of limited use and the requirement is to take into account the MAC layer interactions between the scheduled and random access approaches.  

Two main LTE mechanisms for coexistence with WiFi are presently under consideration.  Namely, \emph{Listen Before Talk} with \emph{Load Based Equipment} (LBT/LBE) and \emph{Carrier Sensing and Adaptive Transmission} (CSAT)~\cite{rahman2011license,qualcomm2014whitepapers}.   LBT/LBE uses carrier sensing and sends a reservation signal to grab the channel from WiFi.   In contrast, CSAT schedules transmissions according to a specified duty-cycle, oblivious to the channel status when a transmission is scheduled to start.  We will see that these two approaches are indeed two fundamental ways to ensure that a scheduled network has reasonable chances to transmit when sharing a channel with random access transmitters.  Further, our results establish that these two approaches can be operated in a proportional fair manner and show how this can be achieved, thereby providing significant input into current discussions on their ability to ensure fair coexistence with WiFi.


We note that interest in fair coexistence is not confined to LTE/WiFi, but also includes coexistence of WiFi and the TDMA access of Zigbee~\cite{zhang2011enabling} as well as WiFi and WiMaX~\cite{berlemann2006unlicensed,kim2011coexistence,bian2014addressing}.  It is also likely to be an important issue in the Internet of Things (IoT) context, where (i) Time-Slotted Channel Hopping (TSCH) protocols may be expected to coexist with random access approaches, both of which are defined in the IEEE 802.15.4e-2015 standard~\cite{IEEE802154e} and (ii) protocols such as the upcoming IEEE 802.11ah~\cite{IEEE80211ah} will need to coexist with Low Power Wide Area (LPWA) networks such as SigFox and LoRa~\cite{palattella2016internet}. More generally, we expect this kind of heterogeneity to become increasingly commonplace in the 5G era and beyond given the expected opportunistic use of spectrum and the growing range of network access technologies.


\section{Related Work}\label{sec:related}



Coexistence among different technologies has traditionally been studied from an interference point of view, especially when coexisting devices have very different capabilities such as in the case of coexistence among WiFi and Bluetooth/Zigbee~\cite{friedman2013power,5672592,tytgat2012avoiding}.  However, taking into consideration the interactions among heterogeneous channel access mechanisms, in particular between scheduled (TDMA-like) and random-access mechanisms, allows for new insight and more scope for ensuring fair coexistence.  Previous work on coexistence of scheduled and random access mechanisms has considered WiFi and the TDMA access of Zigbee~\cite{zhang2011enabling} plus WiFi and WiMaX~\cite{berlemann2006unlicensed,kim2011coexistence,bian2014addressing}.  However, this work does not aim at providing formal fairness guarantees.  Recently, coexistence of WiFi and LTE has started to attract considerable interest.  The risks of employing legacy LTE in unlicensed bands without proper access control that ensures fair coexistence has been made quite evident in e.g.~\cite{cavalcante2013performance} (via simulations),~\cite{ltevswifi-01} (via analysis) or~\cite{lteu-exps} (via experiments).  The 3GPP's study on LTE/WiFi coexistence~\cite{3gppstudy} shows that the presence of unlicensed LTE networks may degrade the performance of existing 802.11 stations if coexistence protocols are not efficient. However, in this study the implementation details of the coexistence mechanisms used are not specified.  Nokia, Qualcomm and Huawei have presented their own white papers on the topic~\cite{nokia, qualcomm,huawei} showing satisfactory results. However, once again, details of the implemented access mechanisms (LBT and CSAT) and simulation models used in these papers are not public.    Fair coexistence of LBT has been studied in~\cite{ning2012unlicensed,hajmohammad2013unlicensed,liu2014small}. However, in these works the WiFi models used lack collisions and idle periods.  Fairness has also been studied in~\cite{ccano-icc} for a simplified version of the LBT scheme and without consideration of collisions between both technologies. Recently,~\cite{guan4cu} has studied how to jointly determine the channel selection, carrier aggregation and fractional spectrum access for CSAT so that the impact to WiFi throughput is {no more than that of another coexisting WiFi network}. However, they do not consider the inherent heterogeneity cost and the resulting model complexity does not allow for explicit solution.   The present paper substantially extends our initial findings in~\cite{ccano-icc-2016}, being both more general and taking account of important aspects such as non-saturated stations and imperfect inter-technology signal detection. 

\section{Implications of Heterogeneity}\label{sec:preliminaries}

Our interest is in coexistence of scheduled and random access networks in the same frequency band.  In this section we begin by considering the consequences for scheduled transmitters of being constrained to transmit at fixed slot times.

\subsection{Idle Channel Probability at Periodic Slot Boundaries}\label{sec:preliminaries-pidle_small}

Intuitively, when the random access transmitters are making efficient use of the channel, so leaving only a small amount of idle time, we expect that the probability of a scheduled transmitter finding the channel idle at the start of an admissible transmission slot will be small.   We formalise this intuituion as follows.

Consider a set of transmitters $A$ that are constrained to transmit in pre-defined time slots $[(j-1)\delta,j\delta)$, $j=1,2\dots$ each of duration $\delta$ (scheduled transmitters).   This might, for example, correspond to a network where the fixed time slots arise due to the use of a TDMA scheduler which divides time into slots and then schedules transmissions in these slots.  Suppose now that these transmitters $A$ share the radio channel with a set of transmitters $B$ (random access transmitters) which transmit during intervals $[T_k,T_k+\Delta_k)$, $k=1,2\dots$, with $T_{k+1} \ge T_k+\Delta_k$ and $\Delta_k$ the duration of the $k$'th transmission.   The start times $\{T_k\}$ are random variables that need not be synchronised with start times $(j-1)\delta$ of the pre-defined time slots used by transmitters $A$.   
The setup is illustrated schematically in Fig.~\ref{fig:slots}.   

We begin by asking for what fraction of $A$ slot start times $\{(j-1)\delta$,  $j=1,2\dots\}$ the channel is idle (i.e. there are no $B$ transmissions in progress).   This provides a measure of the transmission slots where the transmitters $A$ can schedule transmissions without interfering with the transmitters $B$.   Let $\T:=\cup_{k=1,2,\dots} [T_k,T_k+\Delta_k)$ denote the aggregate time occupied by $B$ transmissions and define random variable $X_j$ that takes value 1 when $(j-1)\delta \in \T$ and $0$ otherwise.   We are interested in the value of $p_{\rm idle}:=\lim_{J\rightarrow\infty} \frac{1}{J}\sum_{j=1}^J (1-X_j)$.

We can think of the $A$ transmitters as periodically sampling the channel at times $\{(j-1)\delta\}$ and $p_{\rm idle}$ as the probability that the channel is idle when they sample it.  Assuming that the start times and durations $\{(T_k,\Delta_k),k=1,2,\dots\}$ form a mixing process and that the sampling is not perturbing this process, then by~\cite[Theorem 2]{baccelli2006role} the NIMASTA property (a generalisation of PASTA) holds and $p_{\rm idle}$ is equal to the fraction of time the channel is idle \emph{i.e.} $p_{\rm idle}=\lim_{K\rightarrow\infty}\frac{\sum_{k=1}^{K-1}T_{k+1}-(T_k+\Delta_k)}{T_{K}}$.   


\begin{figure}
\centering
\includegraphics[width=0.6\columnwidth]{./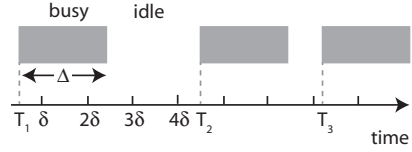}
\caption{Illustrating transmission slots with variable timing and with fixed timing of period $\delta$ (scheduled transmitter). The shaded rectangles indicate variable timing transmissions that start at times $T_k$, $k=1,2,\dots$ and are of duration $\Delta_k=\Delta$ (random access transmitter). }\label{fig:slots}
\end{figure}

\subsubsection{Example: CSMA/CA}\label{sec:ex}
Suppose that the scheduled transmitters $A$ are silent (we will relax this shortly) and the random access transmitters $B$ consist of $n$ stations using CSMA/CA.   Letting $\tau_i$ denote the probability that station $i$ transmits on a MAC slot then  $p_{\rm s}=\sum_{i=1}^n\tau_i\prod_{k=1,k\neq i}^n (1-\tau_k)$ is the probability of a successful transmission and $p_{\rm c}=1-p_{\rm s}-p_{\rm e}$, with $p_{\rm e}=\prod_{i=1}^n (1-\tau_i)$, is the probability of a collision between transmissions.    Let $T_b$ denote the duration of a successful transmission, including the MAC ACK, and $T_{\rm fra}$ the duration of a data frame without corresponding ACK i.e. of a colliding transmission.   Hence, $\Delta_k=T_{\rm b}$ for successful transmissions and $\Delta_k=T_{\rm fra}$ for collisions.   When the transmissions form a renewal process we then have that,
\begin{equation}\label{eq:pidle}
 p_{\rm idle} = 1-\frac{p_{\rm s} T_{\rm b}+ p_{\rm c} T_{\rm fra} }{\E[M]},
\end{equation}
where $\E[M]$ is the average MAC slot duration of CSMA. 

To proceed we insert typical 802.11ac~\cite{IEEE80211ac} values into (\ref{eq:pidle}).    Namely,

\small
\begin{align}
T_{\rm fra}&=T_{\rm plcp}+ \left\lceil \frac{L_{\rm s} + n_{\rm agg}(L_{\rm del} + L_{\rm mac-h} + D) + L_{\rm t}}{n_{\rm sym}} \right\rceil T_{\rm s},\nonumber\\
T_{\rm ack}&=T_{\rm plcp}+\left\lceil \frac{L_{\rm s} + L_{\rm ack} + L_{\rm t}}{n_{\rm sym}} \right\rceil T_{\rm s},
T_{\rm b}=T_{\rm fra} + \rm{SIFS} + T_{\rm ack},\nonumber
\end{align}
\normalsize
where $n_{\rm sym}$ is the number of bits per OFDM symbol, $T_{\rm s}$ is the symbol duration, $n_{\rm agg}$ is the number of packets aggregated in a transmission and the values of the various parameters are specified in Table~\ref{tbl:param}. We also set $\E[M] = \sigma p_{\rm e} + (p_{\rm s} +p_{\rm c} )(\rm{DIFS}+T_{\rm b})$, where $\sigma$ is the duration of a PHY slot~\cite{checco2011proportional}.

Fig.~\ref{fig:p_idle_onlysampling}(a) shows $p_{\rm idle}$ calculated using (\ref{eq:pidle}) vs the number of packets aggregated in a transmission (effectively varying $T_{\rm fra}$) for a WLAN with $n=1$ and $n=3$ stations and $\delta$ deterministic (referred as ``Periodic'') and equal to $100$~ms. The parameters used are detailed in Table~\ref{tbl:param}, $\tau_i, i=1,...,n$ set to 1/16 and MCS configured to $64$-QAM 5/6 with $20$ MHz channel width.  Also shown in Fig.~\ref{fig:p_idle_onlysampling}(a) is the measured fraction of periodic slots $\{(j-1)\delta, j=1,2,\dots\}$ obtained by numerical simulation and, as expected, it can be seen that they are in good agreement.   

It can be seen from Fig.~\ref{fig:p_idle_onlysampling}(a) that the value of $p_{\rm idle}$ is relatively small (in general, $<50\%$ and below $5\%$ for larger WLAN packet sizes), indicating that relatively few non-colliding transmission slots are available for use by the scheduled transmitters $A$.  


\begin{table}[tb]
\centering
\caption{Parameters IEEE 802.11ac~\cite{IEEE80211ac}}\label{tbl:param}
\begin{tabular}{c|c} 
Slot Duration ($\sigma$) & $9$~$\mu$s \\ \hline
DIFS & $34$~$\mu$s\\ \hline
SIFS & $16$~$\mu$s\\ \hline
PLCP Preamble+Headers Duration ($T_{\rm plcp}$) & 40~$\mu$s  \\ \hline
PLCP Service Field ($L_{\rm s}$) & 16 bits \\ \hline
MPDU Delimiter Field ($L_{\rm del}$) & 32 bits  \\ \hline
MAC Header ($L_{\rm mac-h}$) & 288 bits  \\ \hline
Tail Bits ($L_{\rm t}$) & 6 bits  \\ \hline
ACK Length ($L_{\rm ack}$) & 256 bits  \\ \hline
Payload ($D$) & $12000$ bits \\ \hline
\end{tabular}
\end{table}

\subsection{Cost of Heterogeneity}\label{sec:preliminaries-cost}
An important consequence of the fact that $p_{\rm idle}$ is typically small is that for transmitters $A$, which are restricted to transmit at periodic times $\{(j-1)\delta, j=1,2,\dots\}$, at the great majority of the potential transmission times competing transmissions $B$ are already in progress.   This means that for the transmitters $A$ to have a reasonable chance to transmit at the start of a slot boundary, they must either act: \emph{(i)} \emph{Preemptively:} transmitting at the start of a slot boundary regardless of the channel status, thus potentially causing collisions with transmitters $B$ or \emph{(ii)} \emph{Opportunistically:} grabbing the channel when empty and transmitting a reservation signal until the next slot boundary (assuming the transmitters $B$ can effectively detect A's transmissions\footnote{We will revisit this assumption later in Section~\ref{sec:carrier_sense}.}, then a reservation signal will make transmitters $B$ refrain from accessing the channel).  Note that the \emph{Preemptive} approach can be identified with the LTE CSAT approach and the \emph{Opportunistic} approach with LBT/LBE. Both cases incur a reduction 
of effective airtime since in \emph{(i)} additional collisions are generated, and so network throughput is lowered, while in \emph{(ii)} the reservation signal reduces the airtime available for data transmissions which again lowers network throughput\footnote{Note that in contrast to the airtime loss due to collisions, the reservation signal could be used to transmit control or other information.}. 
   That is, the heterogeneity of the transmission slots used by transmitters $A$ and $B$ necessarily incurs an overhead.   

We quantify this overhead in more detail later since it is technology-dependent, but for now we note that {provided transmitters $B$ can effectively detect transmitters $A$ (e.g. via carrier sensing), the throughput overhead can be reduced by increasing the duration of the transmissions by scheduled stations $A$.} This can be seen by noting that the overhead is then a per-transmission one (either a single collision or a single reservation signal is incurred per transmission).  Hence, increasing the duration of $A$ transmissions amortises this overhead over a larger amount of data and increases network throughput efficiency.    However, increasing the duration of $A$ transmissions will tend to increase the delay experienced by the $B$ transmissions since these now need to wait longer for $A$ transmissions to finish before they can start to transmit.   The overhead incurred by use of heterogeneous transmission slots can therefore also be expressed as a trade-off between network throughput and delay.   

\emph{In summary, heterogeneity necessarily incurs a per-transmission overhead which can be alleviated by increasing the duration of the scheduled transmissions provided that random access transmitters effectively detect those. In turn, that solution tends to increase the delay of the random access transmissions.}

\begin{figure}
\centering
\subfigure[Without $A$ transmissions.]{\includegraphics[width=0.7\columnwidth]{./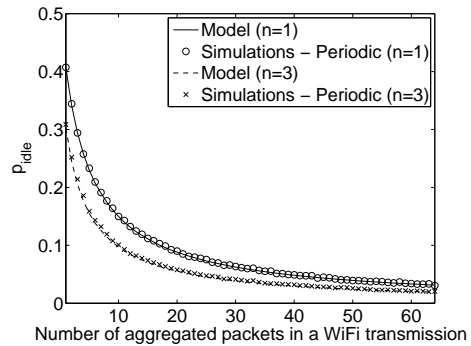}}
\subfigure[With $A$ transmissions (n=1).]{\includegraphics[width=0.7\columnwidth]{./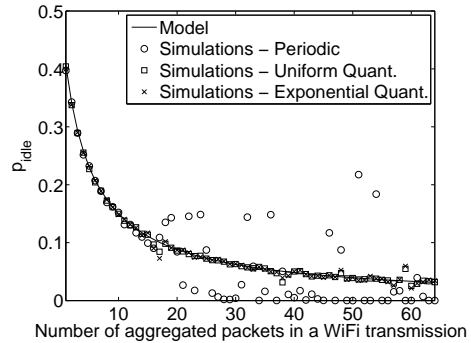}}
\caption{Probability of observing an idle channel at slot boundaries of fixed period $\delta=100$~ms in a channel occupied by an 802.11 WLAN with $n=1$ and $n=3$ WiFi stations.  Simulation results are averages of $100$ simulation runs with $50$ s time horizon.}
\label{fig:p_idle_onlysampling}
\end{figure}

\subsection{Example Revisited: Random Scheduled Starting Times}\label{sec:preliminaries-randomise}
Recall that in the previous example we assumed that the scheduled transmitters $A$ are silent.   We now relax this assumption.  The trickiest case is when scheduled transmissions $A$ can be detected by the random access transmitters $B$, e.g. via carrier sensing, and so the random access transmissions are coupled to the scheduled transmissions.  In this case the NIMASTA property does not hold.   Nevertheless, we show below that provided the time between scheduled transmissions is suitably randomised then simulations indicate that  the insight from the previous analysis generally remains valid.


We begin by highlighting the impact of coupling between the random access and scheduled transmissions via carrier sensing.
Fig.~\ref{fig:p_idle_onlysampling}(b) shows measurements of $p_{\rm idle}$ obtained by numerical simulation for a setup similar to that in Section~\ref{sec:ex} and can be directly compared with Fig.~\ref{fig:p_idle_onlysampling}(a).  The difference is that now there is an $A$ transmitter which starts transmitting at a slot boundary regardless of the channel status and keeps transmitting for a fixed duration $T_{\rm on}=50$~ms which is a multiple of the slot duration $\delta=1$~ms.  It then remains silent for a period $T_{\rm off}$ seconds.   

When $T_{\rm off}$ is deterministic and fixed at $T_{\rm off}=50$~ms (labelled ``Periodic'' in Fig.~\ref{fig:p_idle_onlysampling}(b)), it can be seen that $p_{\rm idle}$ exhibits quite complex behaviour as the duration of the WLAN transmissions is varied.   Further inspection confirms that this is associated with interactions between the $A$ and $B$ transmissions induced by detection of $A$ transmissions by the $B$ nodes.  Namely, due to carrier sensing $B$ transmissions are deferred during each $T_{\rm on}$ interval and then restart during the $T_{\rm off}$ interval. The $B$ transmission behaviour following restart is constrained (there can be no ongoing $B$ transmissions at the start of a $T_{\rm off}$ interval) and this leads to quantisation effects related to the number of complete $B$ transmissions that can be fitted into the $T_{\rm off}$ interval.  Observe that this quantisation effect is non-negligible even when $T_{\rm off}$ is relatively long ($T_{\rm off}$ is set to $50$~ms in Fig.~\ref{fig:p_idle_onlysampling}(b)).

For comparison,  Fig.~\ref{fig:p_idle_onlysampling}(b) shows the corresponding data when $T_{\rm off}$ is drawn randomly after each $T_{\rm on}$ interval according to uniform and exponential distributions with mean $50$~ms, minimum $10$~ms and rounded to a multiple of $\delta$ (labelled as ``Uniform Quant.'' and ''Exponential Quant.'', respectively).   It can be seen that {randomising $T_{\rm off}$ largely removes the quantisation effects and the measured $p_{\rm idle}$ is once again in good agreement with (\ref{eq:pidle})}. The analysis here indicates that it is probably preferable to randomise the duration of the  $T_{\rm off}$ intervals to avoid quantisation effects.

\section{Fair Coexistence}\label{sec:coexistence}


In this section we consider fairness for both of the fundamental coexistence approaches noted in the previous section (\emph{Preemptive} and \emph{Opportunistic}), taking a proportional fair approach. Since achieving fair coexistence is only of concern when we want to make intensive use of the network resources, we consider in the following carrier sense random access, as mechanisms without carrier sense are well known to perform poorly in these conditions, e.g.~\cite{yang2003delay}.  

\subsection{Throughput Model}\label{sec:appendix_model}



We begin by developing a throughput model when transmitters $A$ and $B$ share the same wireless channel. We consider the \emph{Preemptive} and \emph{Opportunistic} approaches in a unified fashion, to simplify both the model and the later fairness analysis.

\begin{figure}
\centering
\includegraphics[width=0.8\columnwidth]{./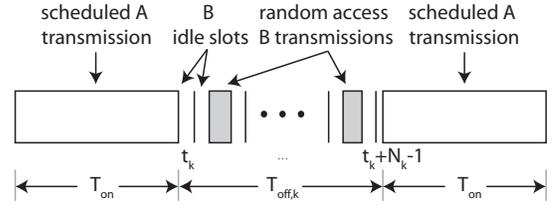}
\caption{Schematic showing scheduled/random access transmission timing.}\label{fig:slots2}
\end{figure}

We consider a single\footnote{The case of multiple scheduled transmitters will be discussed later in Section \ref{sec:discussion}.} scheduled transmitter ($A$) sharing the network with a set of $n$ transmitters $B$. Let $S_k$, $k=1,2,\dots$ denote the times when scheduled transmissions start.  Each scheduled transmission is of duration $T_{\rm on}$, so the silent/off interval between scheduled transmissions $k$ and $k+1$ is of duration $T_{{\rm off},k}:=S_{k+1}-S_k-T_{\rm on}$, see Fig.~\ref{fig:slots2}.  From the insights obtained in Section~\ref{sec:preliminaries}, we assume that random variables $T_{{\rm off},k}$, $k=1,2,\dots$ are i.i.d with mean $\bar{T}_{{\rm off}}:=\E[T_{\rm off}]$.  As we pointed out before, we also assume that transmitters $B$ use CSMA/CA. We further consider that transmitters $B$ are able to detect the channel as being busy during a scheduled transmission with no error; so no CSMA/CA station starts transmission during a $T_{\rm on}$ period, we will revisit this 
assumption later in Section~\ref{sec:carrier_sense}. Note that there may be a collision at the start of a $T_{\rm on}$ period when transmitter $A$ starts transmitting while a CSMA/CA transmission is already in progress.
\newline
\subsubsection{CSMA/CA MAC Slots}

During the $T_{{\rm off},k}$ period when transmitter $A$ is silent following the end of a $T_{\rm on}$ period, the random access stations perform their usual CSMA/CA mechanism.   This process partitions time into MAC slots which may be either an idle slot, of duration $\sigma$, or a busy slot, of duration $\Delta$ (for simplicity, we assume here that both successful transmissions and collisions between the $n$ transmitters are of the same duration), where $\Delta$ denotes the time to transmit a frame.   We index these MAC slots during the $T_{{\rm off},k}$ period by $t_k,t_{k}+1,\dots,t_{k}+N_k-1$, where the number of MAC slots $N_k:=t_{k+1}-t_k$ in the $T_{{\rm off},k}$ period is a random variable, see Fig.~\ref{fig:slots2}. Note that at the end of the $T_{{\rm off},k}$ period there will generally be a partial MAC slot, since the end of the $T_{{\rm off},k}$ period need \emph{not} be aligned with the CSMA/CA MAC slot boundaries, but $t_{k}+N_k$ indexes the last full MAC slot.  

\vspace{0.15cm}

\subsubsection{CSMA/CA Events}

Let $Z_{t,j}$ be a random variable which takes the value $1$ when a CSMA/CA station $j$ transmits in MAC slot $t$.  We assume that the $Z_{t,j}$, $t=1,2,\dots$ are i.i.d, $Z_{t,j}\sim Z_{j}$ and let $\tau_j:=\Pr(Z_j=1)$.  We also assume that the $Z_{t,j}$, $j=1,\dots,n$ are independent.   

Let $X_{t}$ be a random variable which takes the value $1$ when MAC slot $t$ is busy ($Z_{t,j}=1$ for at least one $j\in\{1,\dots,n\}$), and $0$ otherwise.  The $X_{t}$, $t=1,2,\dots$ are i.i.d, $X_{t}\sim X$, with $p_{\rm e}:=\Pr(X=0)=\prod_{i=1}^n(1-\tau_i)$.   
Since the $X_{t}$, $t=1,2,\dots$ are i.i.d and the $T_{{\rm off},k}$, $k=1,2,\dots$ are also i.i.d. the number $N_k$ of MAC slots in the $T_{{\rm off},k}$ periods $k=1,2,\dots$ are i.i.d, $N_k\sim N$.    The duration of MAC slot $t$ is $M_t:=\sigma+X_{t}(\Delta-\sigma)$.  The $M_t$, $t=1,2,\dots$ are i.i.d, $M_t\sim M$ with $\E[M]=\sigma p_{\rm e}+\Delta(1-p_{\rm e})$.

Let $Y_{t,j}$ be a random variable which takes the value $1$ when there is a successful (non-colliding) transmission by a CSMA/CA station $j$ in MAC slot $t$, and $0$ otherwise.    The $Y_{t,j}$, $t=1,2,\cdots$ are i.i.d, $Y_{t,j} \sim Y_j$, with $p_{succ,j}:=\Pr(Y_{j}=0)=\frac{\tau_j}{1-\tau_j}p_{\rm e}$.   
The number of successful transmissions in the $T_{{\rm off},k}$ period is $W_{k,j}:=\sum_{t=t_k}^{t_{k}+N_k-1} Y_{t,j}$ and the mean rate in bit/s of a CSMA/CA station $j$ is $s_{{\rm csma},j}:=\lim_{K\rightarrow\infty}\frac{\sum_{k=1}^K W_{k,j}}{\sum_{k=1}^K T_{\rm on}+T_{{\rm off},k}} D_j$, where $D_j$ is the number of data bits communicated by station $j$ in a successful transmission.

\vspace{0.15cm}

\subsubsection{CSMA/CA Throughput}

The $W_{k,j}$, $k=1,2,\cdots$ are i.i.d, $W_{k,j}\sim W_j$, and the $T_{{\rm off},k}$, $k=1,2,\cdots$ are also i.i.d, $T_{{\rm off},k}\sim T_{{\rm off}}$  (but note that $W_{k,j}$ and $T_{{\rm off},k}$ are not independent since the number of successful transmissions depends on the duration of the $k$'th \emph{off} period).   The $W_{k,j}$, $T_{{\rm off},k}$, $k=1,2,\cdots$ define a renewal-reward process and it follows that  $s_j =\frac{\E[W_{j}]}{T_{\rm on}+\E[T_{{\rm off}}]} D_j$.
We have that $\E[W_{j}] = \E[Y_{j}]\E[N]$ since $Y_{j}$ and $N$ are independent, and $\E[Y_{j}]=p_{succ,j}$.  It remains to determine $\E[N]$ (the average number of full CSMA/CA MAC slots in an \emph{off} period of transmitter $A$).

Let $\hat{T}_{{\rm off},k}=\sum_{t=t_k}^{t_{k}+N_k-1}M_t$.  That is, $\hat{T}_{{\rm off},k} \le T_{{\rm off},k}$ is the duration of that part of the $T_{{\rm off},k}$ period occupied by full CSMA/CA MAC slots \emph{i.e.} excluding any partial MAC slot at the end of the period that may take place when transmitter $A$ uses the \emph{Preemptive} approach.  It follows that $\E[N] = \frac{\E[\hat{T}_{{\rm off}}]}{\E[M]}$ since the $M$ is independent of $N$.   Hence,
\begin{align}
s_{{\rm csma},j} = \frac{p_{succ,j}}{\sigma p_{\rm e}+\Delta(1-p_{\rm e})}\frac{\E[\hat{T}_{{\rm off}}]}{T_{\rm on}+\bar{T}_{{\rm off}}} D_j.
\end{align}
Observe that $s_j:=\frac{p_{succ,j}}{\sigma p_{\rm e}+\Delta(1-p_{\rm e})}D_j$ is just the usual expression for the throughput of a CSMA/CA station (as in~\cite{checco2011proportional} for the case of 802.11), but that this is now scaled by $\frac{\E[\hat{T}_{{\rm off}}]}{T_{\rm on}+\bar{T}_{{\rm off}}}$.

\vspace{0.15cm}

\subsubsection{$\E[\hat{T}_{{\rm off}}]$}

To complete the expression for CSMA/CA throughput we require $\E[\hat{T}_{{\rm off}}]$. We show in the following how to compute this for the two mechanisms considered. \newline

\underline{Preemptive Approach}

Since in the \emph{Preemptive} approach, transmitter $A$ transmits regardless of the channel status, a scheduled transmission may start part way through a CSMA/CA MAC slot.   In this case we might approximate $\E[\hat{T}_{{\rm off}}]$ by $\E[T_{{\rm off}}]$, and we can expect this approximation to be accurate when $\E[T_{{\rm off}}]$ is sufficiently large that any partial CSMA/CA MAC slots can be neglected.  However, when $\E[T_{{\rm off}}]$ is smaller it is necessary to use a more accurate approximation for $\E[\hat{T}_{{\rm off}}]$.   We adopt the following.    When the start times $S_k$, $k=1,2,\dots$ of the scheduled transmissions satisfy the lack of anticipation property, \emph{e.g.} when the spacing $S_{k+1}-S_k$ is drawn from an exponential distribution~\cite{baccelli2006role}, then the transmissions from transmitter $A$ satisfy the PASTA property. This is in turn in line with the insights obtained in Section~\ref{sec:preliminaries-randomise}.   
Then, the probability that the start of a transmission from transmitter $A$ coincides with a CSMA/CA transmission is $p_{{\rm txA}} = \frac{(1-p_{\rm e})\Delta }{\E[M]}$.   Assuming that on average the start of an \emph{on} period that collides with a CSMA/CA transmission occurs half-way through the CSMA/CA transmission, then

\small
\begin{align}
\E[\hat{T}_{{\rm off}}]&=\E[T_{{\rm off}}](1-p_{{\rm txA}}) + (\E[T_{{\rm off}}] - \Delta/2)p_{{\rm txA}}\nonumber\\
&=\bar{T}_{{\rm off}} -c_1,
\end{align}
\normalsize
$c_1$ is the average airtime lost due to a partial CSMA/CA MAC slot before the start of an \emph{on} period (when the scheduled transmitter starts transmitting) with $c_1= (\Delta/2)p_{{\rm txA}}$.\newline

\underline{Opportunistic Approach}

In the \emph{Opportunistic} approach, we assume that the start of an \emph{on} period is aligned with a CSMA/CA MAC slot boundary since the scheduled network must in this case detect CSMA/CA transmissions and can then ensure this.   Therefore there are no partial MAC slots and 
\begin{align}
\E[\hat{T}_{{\rm off}}]&=\bar{T}_{{\rm off}}.
\end{align}
Equivalently, $\E[\hat{T}_{{\rm off}}]=\bar{T}_{{\rm off}}-c_1$ where the average airtime lost due to a partial CSMA/CA MAC slot before the start of an \emph{on} period is now $c_1:= 0$.  Also, the  probability that the start of an \emph{on} period coincides with a CSMA/CA transmission is just $p_{{\rm txA}} = 1-p_{\rm e}$, that is, the probability of having at least one CSMA/CA station transmitting in a given MAC slot.  
\vspace{0.25cm}

\subsubsection{Scheduled Network Throughput}

Let $r$ denote the transmit rate in bits/s of transmitter $A$.  When the start time of an \emph{on} period does not coincide with a CSMA/CA transmission then the error-free transmission of transmitter $A$ is of duration $T_{\rm on}$ \emph{i.e.} $rT_{\rm on}$ bits are transmitted.   When the \emph{on} start time coincides with a CSMA/CA transmission then we assume that the first fully or partially overlapping slots of the transmission are lost.  The precise behaviour differs for the \emph{Preemptive} and \emph{Opportunistic} approaches, as follows.\newline

\underline{Preemptive Approach}

On average the start of a transmission from transmitter $A$ that collides with a CSMA/CA transmission occurs half-way through the CSMA/CA transmission, and so on average the first $\Delta/2$ seconds of the transmission from the scheduled network are lost.  Assuming that partial overlap of a scheduled slot with a CSMA/CA transmission leads to loss of the whole slot, then $r(T_{\rm on}-\lceil\frac{\Delta}{2\delta}\rceil \delta)$ bits are transmitted by transmitter $A$, where $\delta$ denotes the duration of a pre-defined slot in the scheduled network.  It follows that the resulting throughput when using the \emph{Preemptive} approach is:
\begin{align}
s_{{\rm txA}} &=r\frac{T_{\rm on}(1-p_{{\rm txA}}) + (T_{\rm on}-\lceil\frac{\Delta}{2\delta}\rceil \delta)p_{{\rm txA}}} {T_{\rm on}+\bar{T}_{{\rm off}}}\nonumber\\
&=r\frac{T_{\rm on}-c_2} {T_{\rm on}+\bar{T}_{{\rm off}}},
\end{align}
where $c_2$ is the mean scheduled airtime during which collisions with the CSMA/CA transmitters occur, with $c_2=\lceil\frac{\Delta}{2\delta}\rceil \delta p_{{\rm txA}}$.\newline 

\underline{Opportunistic Approach}

Since in this case the scheduled transmissions are aligned with CSMA/CA MAC slots the duration of a collision between both networks is simply $\Delta$.  Additionally, since the transmitter $A$ has to transmit a reservation signal until the next subframe boundary of average duration $T_{\rm res}=\delta/2$, useful data transmission only occurs during $T_{\rm on}-T_{\rm res}$ and the number of bits transmitted at each \emph{on} period when it suffers from a collision with the CSMA/CA network is: $r(T_{\rm on}-\max(T_{\rm res}, \lceil\frac{\Delta}{\delta}\rceil \delta))$.  It follows that the throughput achieved by the scheduled network when using the \emph{Opportunistic} approach is:

\small
\begin{align}
s_{{\rm txA}} &=r\frac{T_{\rm on}-c_2}{T_{\rm on}+\bar{T}_{{\rm off}}},
\end{align}
\normalsize
where now $c_2=\max(T_{\rm res}, \lceil\frac{\Delta}{\delta}\rceil \delta)p_{{\rm txA}} + T_{\rm res}(1-p_{{\rm txA}})$ is the mean scheduled airtime during which collisions between scheduled and CSMA/CA transmissions occur and/or the scheduled transmitter sends a reservation signal.

\subsection{Proportional Fair Allocation}\label{sec:appendix_prop_fair_allocation}

We now derive the proportional fair rate allocation when transmitter $A$ (scheduled) and the $n$ nodes of type $B$ (CSMA/CA) share a channel and transmitter $A$ uses either the: (i) \emph{Preemptive} or (ii) \emph{Opportunistic} approach.

\begin{theorem}[Proportional Fair Rate Allocation]\label{tm:one}
The proportional fair rate allocation assigns the following fraction of channel airtime to full CSMA/CA MAC slots:
\begin{align}
\frac{\bar{T}_{{\rm off}}^* -c_1}{T_{\rm on}+\bar{T}_{{\rm off}}^*} 
&= \frac{n}{n+1}\label{eq:p1}
\end{align}
and the following fraction of airtime to the scheduled network:
\begin{align}
\frac{T_{\rm on}+c_1}{T_{\rm on}+\bar{T}_{{\rm off}}^*} 
&=1-\frac{\bar{T}_{{\rm off}}^* -c_1}{T_{\rm on}+\bar{T}_{{\rm off}}^*} 
=\frac{1}{n+1},\label{eq:p2}
\end{align}
where $\bar{T}_{{\rm off}}^*$ is the proportional fair mean off time between scheduled transmissions, $c_1$ is the average airtime lost due to a partial CSMA/CA MAC slot before the start of an \emph{on} period (when the scheduled transmitter starts transmitting).
%
\end{theorem}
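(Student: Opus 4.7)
The plan is to derive the two equalities as the first-order stationarity condition of the proportional-fair utility
$U := \sum_{j=1}^n \log s_{{\rm csma},j} + \log s_{{\rm txA}}$. First, I would substitute the throughput expressions derived in Section~\ref{sec:appendix_model}. Both $s_{{\rm csma},j}$ and $s_{{\rm txA}}$ factor cleanly into a $\tau$-dependent prefactor (gathering $p_{{\rm succ},j}$, $p_{\rm e}$, $\E[M]$, $D_j$ and $r$) multiplied by the airtime ratios $(\bar{T}_{{\rm off}}-c_1)/(T_{\rm on}+\bar{T}_{{\rm off}})$ and $(T_{\rm on}-c_2)/(T_{\rm on}+\bar{T}_{{\rm off}})$ respectively. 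Taking logarithms therefore splits $U$ into a sum of $\tau$-only terms plus
\[
n\log(\bar{T}_{{\rm off}}-c_1) + \log(T_{\rm on}-c_2) - (n+1)\log(T_{\rm on}+\bar{T}_{{\rm off}}).
\]
Since $c_1$ and $c_2$ depend on the attempt probabilities but not on $\bar{T}_{{\rm off}}$, the optimizer $\bar{T}_{{\rm off}}^*$ is determined by this piece alone and decouples from the separate optimization over the $\tau_i$.

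Next, differentiating with respect to $\bar{T}_{{\rm off}}$ and setting the derivative to zero gives
\[
\frac{n}{\bar{T}_{{\rm off}}^* - c_1} \;=\; \frac{n+1}{T_{\rm on}+\bar{T}_{{\rm off}}^*},
\]
whence $\bar{T}_{{\rm off}}^* = nT_{\rm on}+(n+1)c_1$. Plugging back yields the two useful identities $\bar{T}_{{\rm off}}^*-c_1 = n(T_{\rm on}+c_1)$ and $T_{\rm on}+\bar{T}_{{\rm off}}^* = (n+1)(T_{\rm on}+c_1)$; dividing them through produces the claimed fractions (\ref{eq:p1}) and (\ref{eq:p2}) immediately. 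A brief second-order check — for instance, at the critical point the second derivative evaluates to $-(n+1)/\bigl(n(T_{\rm on}+\bar{T}_{{\rm off}}^*)^2\bigr)<0$, or alternatively $U\to -\infty$ at both endpoints $\bar{T}_{{\rm off}}\downarrow c_1$ and $\bar{T}_{{\rm off}}\to\infty$ — confirms that this stationary point is the maximum.

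The main (mild) obstacle is the bookkeeping around $c_1$ and $c_2$, which are themselves implicit functions of the $\tau_i$'s — most visibly in the Preemptive case where $c_1=(\Delta/2)p_{{\rm txA}}$ couples to $p_{\rm e}$ and $\E[M]$. What makes the statement clean, and what allows a single theorem to cover both the Preemptive and Opportunistic settings, is that the first-order condition in $\bar{T}_{{\rm off}}$ holds $c_1$ fixed, and the resulting airtime fractions $n/(n+1)$ and $1/(n+1)$ are independent of its numerical value. The actual proportional-fair $\tau^\star$ is therefore not needed here; it would be pinned down by a separate first-order condition on the $\tau$-only terms of $U$ and does not enter the airtime-split claim.
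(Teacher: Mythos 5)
Your derivation is correct, and it reaches the paper's conclusion by a more elementary route than the one used in the text. The paper does not substitute the throughput expressions and differentiate directly; instead it keeps $\tilde{s}_{{\rm csma},j}$, $\tilde{s}_{{\rm txA}}$ and $\tilde{z}:=\log(\bar{T}_{{\rm off}}-c_1)$ as decision variables, writes the rate relations as inequality constraints, verifies that $\log(T_{\rm on}+c_1+e^{\tilde{z}})$ is convex in $\tilde{z}$ so the problem is a convex program satisfying Slater's condition, and then extracts the airtime split from the KKT stationarity condition $e^{\tilde{z}^*}/(T_{\rm on}+c_1+e^{\tilde{z}^*})=n/(n+1)$ together with a monotonicity argument for uniqueness. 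Your first-order condition $n/(\bar{T}_{{\rm off}}^*-c_1)=(n+1)/(T_{\rm on}+\bar{T}_{{\rm off}}^*)$ is exactly that KKT condition in the original variable, and your explicit solution $\bar{T}_{{\rm off}}^*=nT_{\rm on}+(n+1)c_1$ agrees with the paper's implicit one. What your approach buys is brevity and an explicit closed form; what it needs in exchange is the global-optimality argument you supply (unique stationary point plus $U\to-\infty$ at both ends of the feasible range), since in the raw variable $\bar{T}_{{\rm off}}$ the objective piece $n\log(\bar{T}_{{\rm off}}-c_1)-(n+1)\log(T_{\rm on}+\bar{T}_{{\rm off}})$ is not manifestly concave — this is precisely what the paper's log change of variables and convex-programming framing deliver automatically, and that machinery is then reused almost verbatim for the unsaturated case in Theorem~\ref{th:two}. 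One small remark: in Theorem~\ref{tm:one} the attempt probabilities $\tau_j$ are fixed (saturated stations), so your care about decoupling the $\bar{T}_{{\rm off}}$ optimization from a separate optimization over the $\tau_i$ is not needed here, though it is harmless and correctly anticipates that the resulting fractions $n/(n+1)$ and $1/(n+1)$ do not depend on the value of $c_1$, which is why a single statement covers both the \emph{Preemptive} and \emph{Opportunistic} cases.
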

\begin{proof}
Let $z = \bar{T}_{{\rm off}} - c_1$, $\tilde{z}:=\log z$, $\tilde{s}_{{\rm csma},j} :=\log {s}_{{\rm csma},j}$ and $\tilde{s}_{{\rm txA}}:=\log {s}_{{\rm txA}}$.  Then,
\begin{align}
\tilde{s}_{{\rm csma},j} &= \log s_j \frac{\bar{T}_{{\rm off}} - c_1}{T_{\rm on}+\bar{T}_{{\rm off}}} 
= \log s_j +\tilde{z}-\log(T_{\rm on}+c_1+e^{\tilde{z}}),\notag
\end{align}
and
\small
\begin{align}
\tilde{s}_{{\rm txA}}&=\log r\frac{T_{\rm on}-c_2}{T_{\rm on}+\bar{T}_{{\rm off}}} 
=\log (r(T_{\rm on}-c_2)) -\log(T_{\rm on}+c_1+e^{\tilde{z}}).\notag
\end{align}
\normalsize
It can be verified (by inspection of the second derivative) that $\log(T_{\rm on}+c_1+e^{\tilde{z}})$ is convex in $\tilde{z}$ when $T_{\rm on}+c_1\ge 0$.  Hence, putting the network constraints in standard form,

\small
\begin{align}
\tilde{s}_{{\rm csma},j} -\log s_j -\tilde{z}+\log(T_{\rm on}+c_1+e^{\tilde{z}}) &\le 0,\ j=1,\dots,n\\
\tilde{s}_{{\rm txA}}-\log q +\log(T_{\rm on}+c_1+e^{\tilde{z}})&\le 0,
\end{align}
\normalsize
where $q:=r(T_{\rm on}-c_2)$, it can be seen that they are convex in decision variables $\tilde{s}_{{\rm csma},j}$, $\tilde{s}_{{\rm txA}}$ and $\tilde{z}$.

The proportional fair rate allocation for the scheduled network is the solution to the following utility optimisation,

\small
\begin{align*}
&\max_{\tilde{s}_{{\rm csma},j},\tilde{s}_{{\rm txA}},\tilde{z}} \tilde{s}_{{\rm txA}}+\sum_{j=1}^n\tilde{s}_{{\rm csma},j}\\
s.t.\quad&\tilde{s}_{{\rm csma},j} -\log s_j -\tilde{z}+\log(T_{\rm on}+c_1+e^{\tilde{z}}) \le 0,\ j=1,\dots,n\\
&\tilde{s}_{{\rm txA}}-\log q +\log(T_{\rm on}+c_1+e^{\tilde{z}})\le 0.
\end{align*}
\normalsize
The optmisation is convex and satisfies the Slater condition, hence strong duality holds.  The Lagrangian is,
\begin{align*}
L=&-\tilde{s}_{{\rm txA}}-\sum_{j=1}^n\tilde{s}_{{\rm csma},j} \\
&+\theta(\tilde{s}_{{\rm txA}}-\log q +\log(T_{\rm on}+c_1+e^{\tilde{z}}))\\
&+ \sum_{j=1}^n\lambda_j(\tilde{s}_{{\rm csma},j} -\log s_j -\tilde{z}+\log(T_{\rm on}+c_1+e^{\tilde{z}})).
\end{align*}
The main KKT conditions are
\begin{align}
&-1+\theta = 0, \quad
-1+\lambda_j = 0\,\ j=1,\dots,n \nonumber\\
&(\theta + \sum_{j=1}^n\lambda_j)\frac{e^{\tilde{z}}}{T_{\rm on}+c_1+e^{\tilde{z}}}-\sum_{j=1}^n\lambda_j=0.\nonumber
\end{align}
Thus, at an optimum $\theta=1$, $\lambda_j=1$, $j=1,\dots,n$ and
\begin{align}
\frac{e^{\tilde{z}}}{T_{\rm on}+c_1+e^{\tilde{z}}}=\frac{n}{n+1} \label{eq:toff}.
\end{align}
It can be verified (by inspection of the first derivative) that the LHS is monotonically increasing in $\tilde{z}$ and so a unique solution $\tilde{z}$ exists satisfying (\ref{eq:toff}).   Letting $\tilde{z}^*$ denote this solution, the proportional fair $\bar{T}_{{\rm off}}$ value is given by 
$\bar{T}_{{\rm off}}^* = e^{\tilde{z}^*}+c_1$.
The channel time fraction available for full CSMA/CA MAC slots is
\begin{align}
\frac{\bar{T}_{{\rm off}}^* -c_1}{T_{\rm on}+\bar{T}_{{\rm off}}^*} 
&= \frac{e^{\tilde{z}^*}}{T_{\rm on}+c_1+e^{\tilde{z}^*}} 
= \frac{n}{n+1}
\end{align}
and the channel time fraction used by the scheduled network is
\begin{align}
1-\frac{\bar{T}_{{\rm off}}^* -c_1}{T_{\rm on}+\bar{T}_{{\rm off}}^*} 
&=\frac{T_{\rm on}+c_1}{T_{\rm on}+\bar{T}_{{\rm off}}^*} 
=\frac{1}{n+1}.
\end{align}
\end{proof}

\begin{figure*}[hhht!] 
\centering
\subfigure[$n=1,T_{\rm on}=10$~ms]{\includegraphics[width=0.65\columnwidth]{./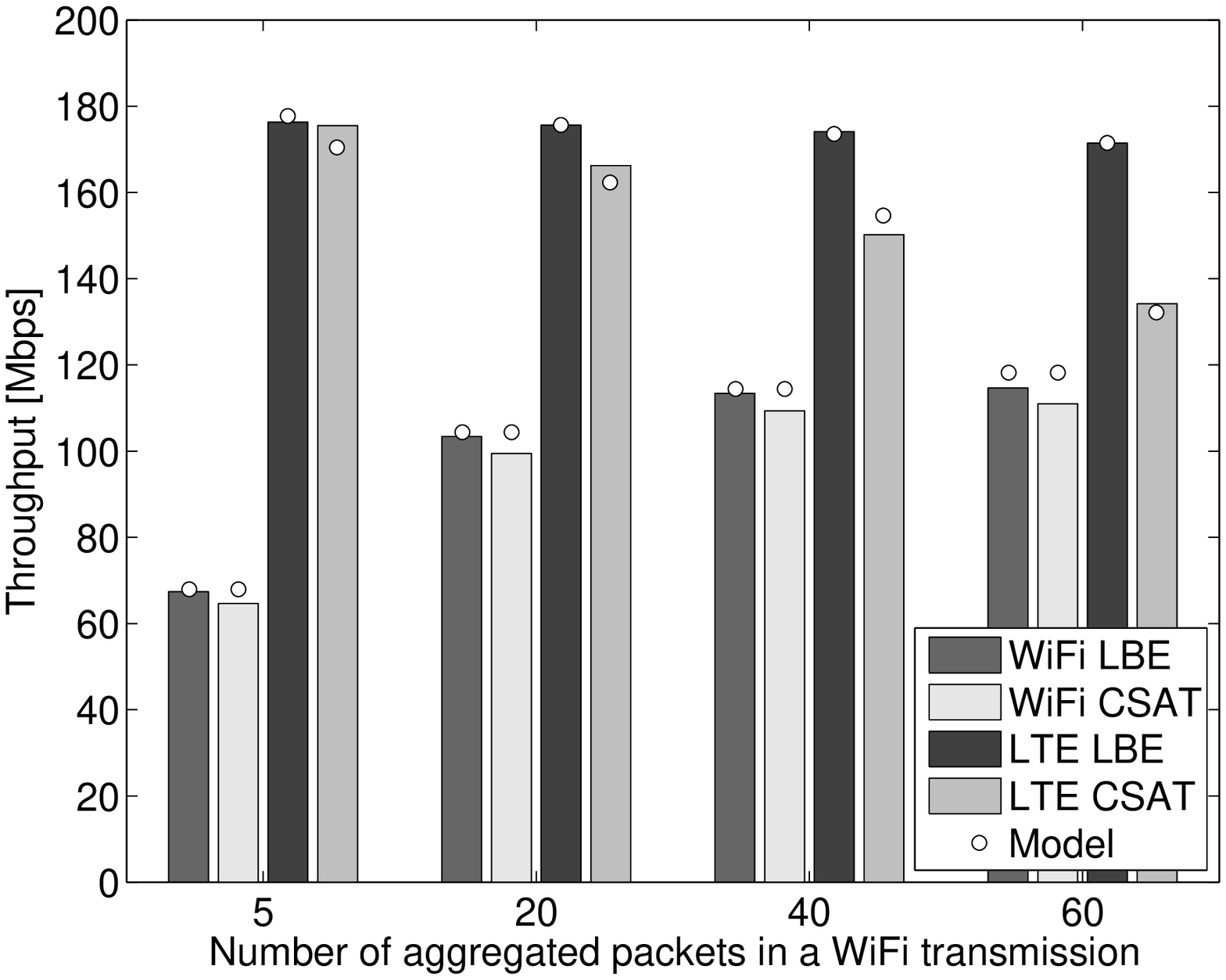}}
\subfigure[$n=3,T_{\rm on}=10$~ms]{\includegraphics[width=0.65\columnwidth]{./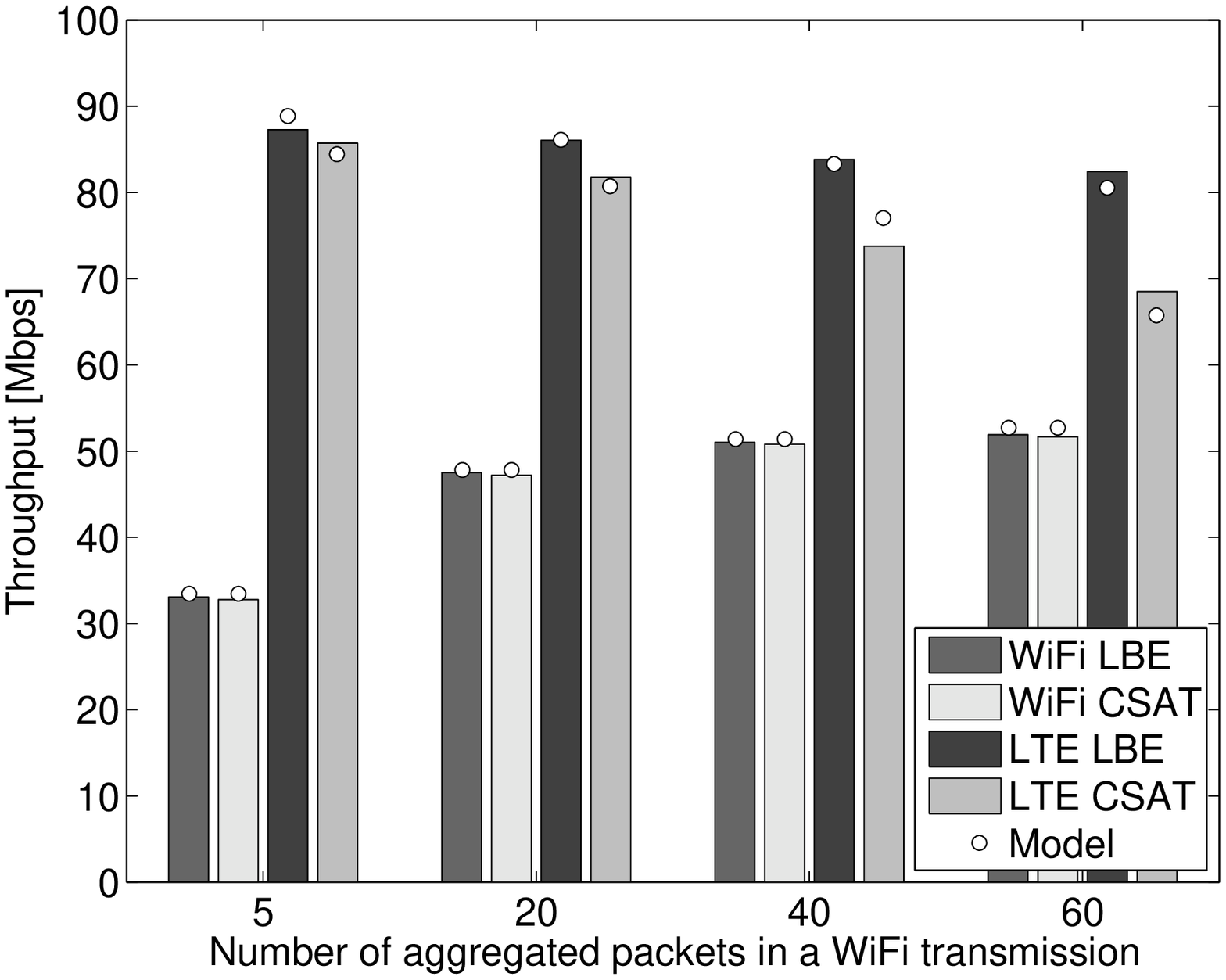}}
\subfigure[$n=9,T_{\rm on}=10$~ms]{\includegraphics[width=0.65\columnwidth]{./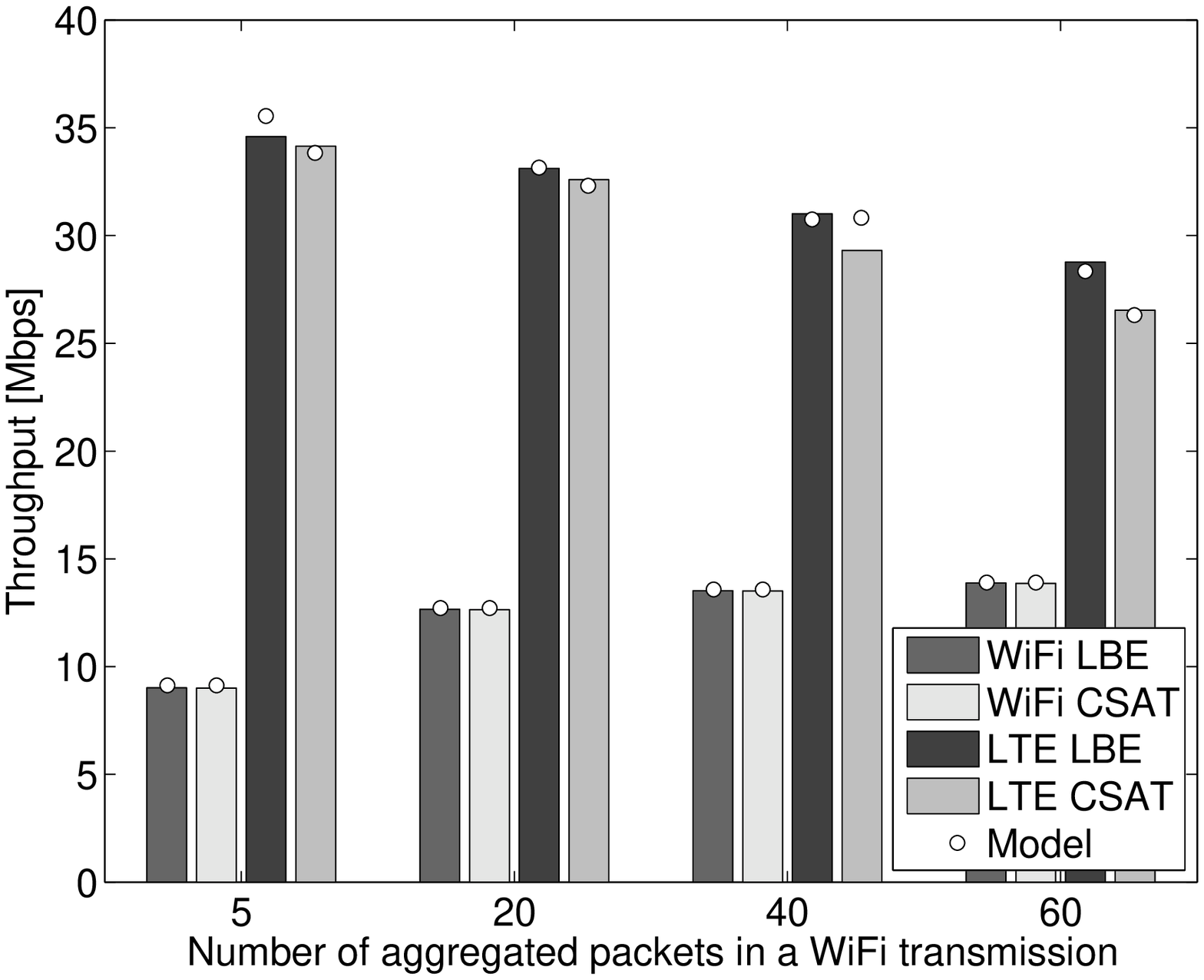}}\\
\subfigure[$n=1,T_{\rm on}=50$~ms]{\includegraphics[width=0.65\columnwidth]{./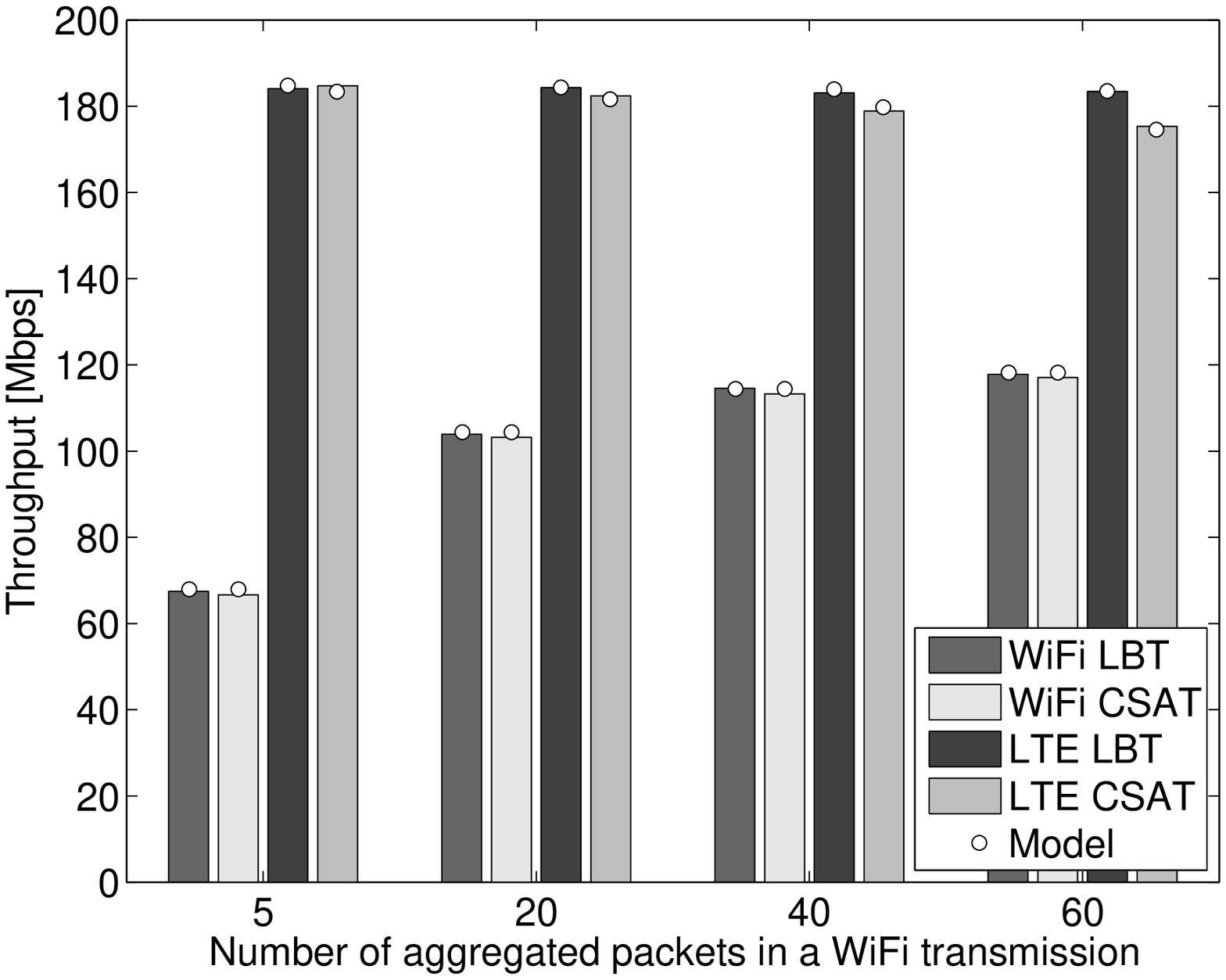}}
\subfigure[$n=3,T_{\rm on}=50$~ms]{\includegraphics[width=0.65\columnwidth]{./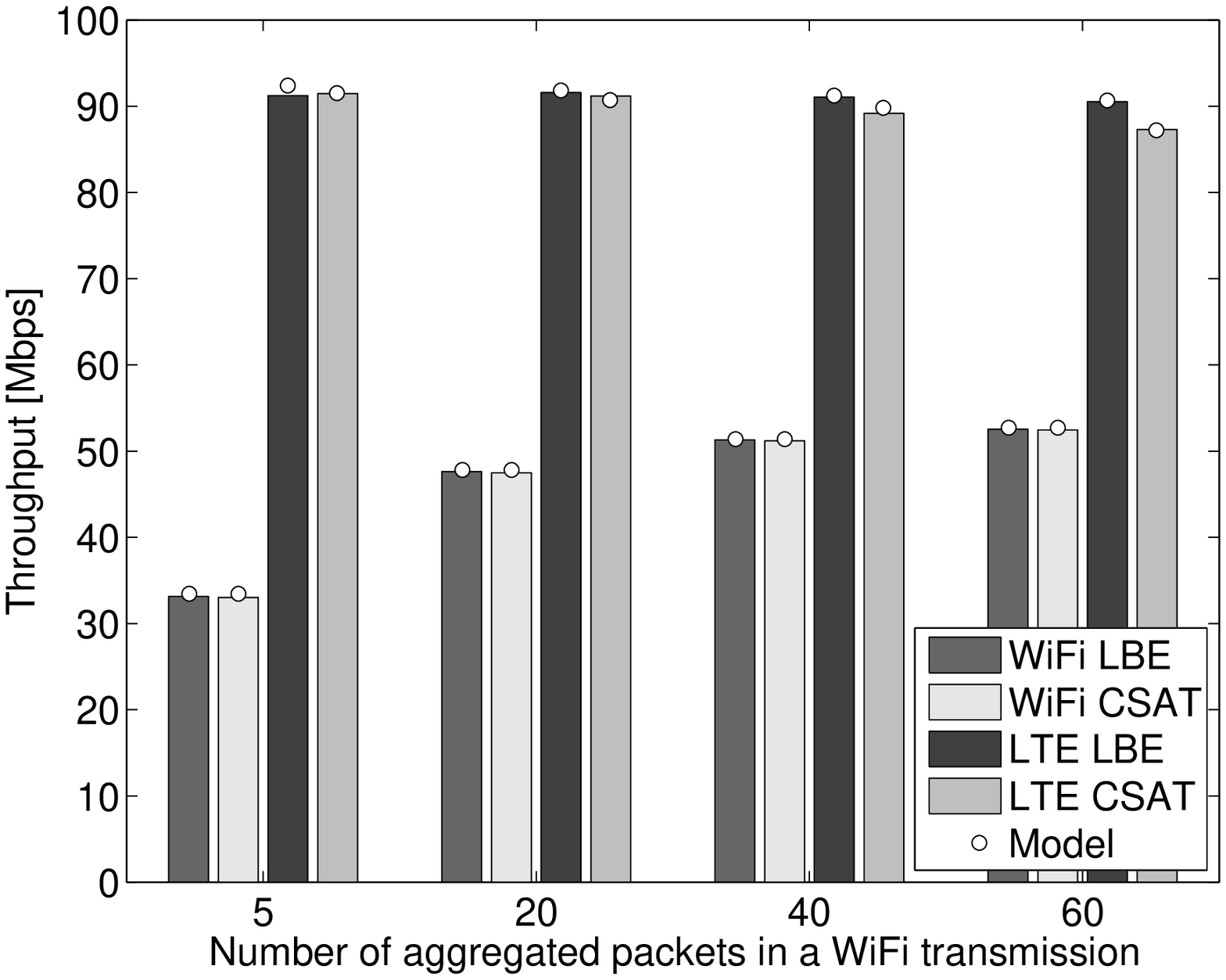}}
\subfigure[$n=9,T_{\rm on}=50$~ms]{\includegraphics[width=0.65\columnwidth]{./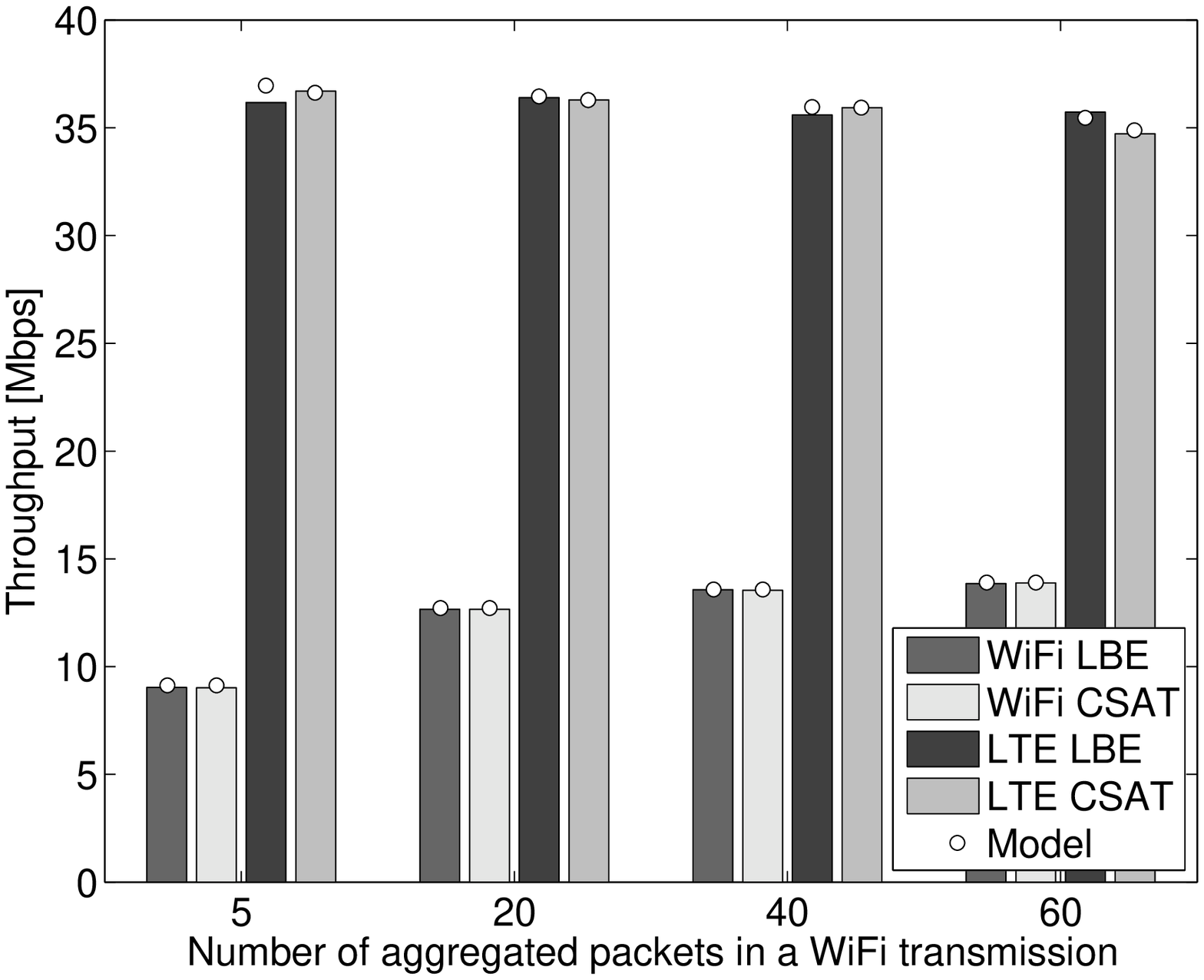}}
\caption{Proportional fair throughput allocation for different configurations of $n$ and $T_{\rm on}$ while varying $n_{\rm agg}$ (effectively changing the packet size of WiFi transmissions). Simulation results are averages of $100$ simulation runs with $50$ s time horizon.}
\label{fig:throughput_prop_fair}
\end{figure*}

\subsection{Discussion}\label{sec:discussion}
The scheduled transmission time $T_{\rm on}$ includes the reservation signal and/or the airtime due to collisions between scheduled and CSMA/CA transmitters.  Hence, $T_{\rm on}+c_1$ is the time spent transmitting plus the time spent on collisions, reservation signals and partial CSMA/CA MAC slots.  Letting $(\bar{T}_{{\rm off}}-c_1)/n$ denote the airtime \emph{allocated} to a CSMA/CA station, including idle time and collisions with other CSMA/CA nodes, then Theorem~\ref{tm:one} tells us that the proportional fair rate allocation equalises this airtime and $T_{\rm on}+c_1$.   

That is, the airtime allocated to the scheduled network is the same as the total channel time effectively used by the CSMA/CA network divided among the number of CSMA/CA transmitters.   This seems quite intuitive and is similar to previous proportional fair analysis for WiFi-only settings~\cite{checco2011proportional,2012_garcia-saavedra_infocom_ados}; the most interesting point here is that the proportional fair allocation assigns the cost of heterogeneity, i.e. the airtime cost of a collision between transmitter $A$ and the CSMA/CA network and of any reservation signals, to the scheduled network. On the other hand, the inefficiency of the random access mechanism (idle periods and collisions among CSMA/CA nodes) is accounted for in the total effective channel time of the CSMA/CA network.  Note that one immediate consequence of the cost of heterogeneity being accounted for in the airtime allocated to the scheduled network is that both the \emph{Preemptive} and \emph{Opportunistic} approaches, when 
configured for a proportional fair rate allocation, result in the same throughput for the CSMA/CA network.

The extension of this analysis to allow multiple users in the same scheduled network (i.e. where all of these users are synchronised to the same set of transmit slots) is straightforward, and in this case the airtime allocated to a user associated to the scheduled network is the same as that \emph{allocated} to a CSMA/CA station (again accounting for the cost of heterogeneity within the scheduled airtime). The case of multiple scheduled transmitters belonging to different networks will be discussed later in Section~\ref{sec:scope}.

%


\subsection{Example: Unlicensed LTE and 802.11}\label{sec:exampleLTEU}

We revisit here the example in Section~\ref{sec:ex} to illustrate the proportional fair allocation.  
As noted before, the \emph{Preemptive} approach can be identified with the LTE CSAT approach and the \emph{Opportunistic} approach with LBT/LBE.   We use the same MAC parameters as in Section~\ref{sec:ex}.  Additionally, for the LTE network throughput calculation we assume that the Control Format Indicator (CFI) is equal to 0 (i.e. we assume that the control information is sent through the licensed interface, which is in line with current 3GPP and LTE-U Forum discussions).    To obtain the proportional fair allocation our throughput model is applied with $\E[M]=\sigma p_{\rm e}+(T_{\rm b}+{\rm{DIFS}})(1-p_{\rm e})$ and in the \emph{Preemptive} case $p_{{\rm txA}} = \frac{p_{\rm s} T_{\rm b}+ p_{\rm c} T_{\rm fra} }{\E[M]}$.  This accounts for the interframe spaces defined in 802.11 as well as for $\Delta$ now being different in the case of a successful transmission versus a collision.

\subsubsection{Cost of Heterogeneity}

Fig.~\ref{fig:throughput_prop_fair} shows the WiFi and LTE proportional fair throughputs when using CSAT and LBE.   Results are shown both for detailed packet-level simulations and for the throughput model presented in Section~\ref{sec:appendix_model}.   These show the impact of varying $T_{\rm on}$, $n$ and the number of aggregated packets in a WiFi transmission $n_{\rm agg}$ (effectively changing the packet size).  

It can be seen that the WiFi throughput is essentially the same when using either CSAT and LBE for all configurations.   In contrast, however, the LTE throughput varies depending on the coexistence mechanism used and the network conditions.   For example, we can observe a considerable decrease in throughput when CSAT is used for $T_{\rm on}=10$~ms and larger WiFi packet sizes (see Figs.~\ref{fig:throughput_prop_fair}a-c). Further inspection reveals that the reason for this is the increased collision probability between LTE and WiFi transmissions when using CSAT compared to LBE.  

Collisions between LTE and WiFi are part of the cost of heterogeneity discussed in Section~\ref{sec:preliminaries-cost}, and in a proportional fair rate allocation are accounted for in the LTE channel airtime (by Theorem~\ref{tm:one}).   This cost of heterogeneity can be reduced by increasing the duration $T_{\rm on}$ of each LTE transmission.  For example, we can observe in Figs.~\ref{fig:throughput_prop_fair}d-f that both the CSAT and LBE schemes provide similar LTE throughput for $T_{\rm on}=50$~ms. However, the duration of LTE transmissions will tend to increase the delay experienced by WiFi and we consider this in more detail next.

\subsubsection{Throughput vs Delay}

Although increasing the duration of the LTE transmissions improves LTE throughput and reduces the cost of heterogeneity, it may also increase the delay experienced by WiFi transmissions when WiFi stations defer their transmissions while LTE transmissions are ongoing.   We investigate the distribution of the MAC access delay of WiFi packets when LTE uses CSAT and LBE in order to assess the trade-off between LTE throughput and WiFi delay.  

Fig.~\ref{fig:cdf_delay} shows the measured CDF of the WiFi MAC access delay when $n=1$, $n_{\rm agg}=60$ packets and for $T_{\rm on}=10$~ms and $T_{\rm on}=50$~ms. It can be seen that for a given value of $T_{\rm on}$, the distribution of the WiFi delay is similar for both CSAT and LBE. We can also see that increasing $T_{\rm on}$ causes longer delays for a fraction of the WiFi packets (namely, those whose transmisison has been deferred while an LTE transmission is in progress). Note that increasing $T_{\rm on}$ while maintaining the proportional fair configuration also causes the LTE network to access the channel less often, that is $T_{\rm off}$ also increases correspondingly. The consequence of this is that a higher percentage of the WiFi packets can access the channel during $T_{\rm off}$, experiencing short delays and so the \emph{mean} WiFi packet delay actually falls as the LTE $T_{\rm on}$ increases.   However, a fraction of WiFi packets experience long delays.  For example, for $T_{\rm on}=10$~ms 
it can be seen from Fig.~\ref{fig:cdf_delay} that around $73\%$ of the WiFi transmissions observe short delays, while for $T_{\rm on}=50$~ms, this percentage increases to $\sim 94\%$.

\begin{figure}[hht!] 
\centering
\includegraphics[width=0.70\columnwidth]{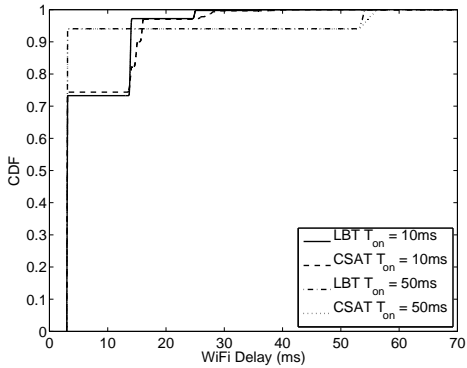}
\caption{CDF of delay for WiFi nodes with $n=1$, $n_{\rm agg}=60$ packets and for different $T_{\rm on}$ values. Results obtained from $100$ simulation runs with $50$~s time horizon.}
\label{fig:cdf_delay}
\end{figure}






\section{Networks With a Mixture of Unsaturated and Saturated Stations}\label{sec:unsat}

Fair coexistence is only relevant when one or more stations are saturated (are persistently backlogged and so always have a packet to send). Otherwise, all stations can serve all of their offered load and there is no need to consider fairness in resource sharing.   The analysis in the preceding section assumes that all stations are saturated and in this section we extend consideration to situations where some CSMA/CA stations may be unsaturated\footnote{Extension to the scheduled network being unsaturated is discussed in Section~\ref{sec:scope}.}. 


\subsection{Utility Fair Optimisation}

When CSMA/CA stations are saturated their transmission attempt probability $\tau_j$ is fixed.  However, for unsaturated stations $\tau_j$ depends upon both the offered load and the network load (since the latter affects the mean MAC slot duration) as well as any buffer dynamics.  To extend our analysis to consider the case of unsaturated CSMA/CA stations we therefore make the following simplifying assumptions:  

\noindent 1. \emph{Small buffers}.  If the buffers are long, then during a $T_{\rm on}$ period unsaturated CSMA/CA stations may accumulate packets that are then transmitted during the $T_{\rm off}$ period. That is, the transmission activity of the scheduled network can therefore affect the transmission probability of the unsaturated CSMA/CA stations during the $T_{\rm off}$ period as it has an effect on buffer dynamics.   In order to avoid consideration of these buffer dynamics, which make the analysis much less tractable, we assume small buffers and neglect this effect.   The throughput model in Section~\ref{sec:appendix_model} then applies unchanged to the unsaturated case.

\noindent 2. \emph{${p}_{\rm e}$ constant at rate region boundary}.   Similarly to~\cite{subramanian2012convexity} we assume that at the boundary region the probability of a CSMA/CA MAC slot being empty can be considered constant i.e. ${p}_{\rm e}=\bar{p}_{\rm e}$ where $\bar{p}_{\rm e}$ is a fixed parameter.  As noted in~\cite{subramanian2012convexity}, this approximation is generally accurate and involves little loss.

%
With these assumptions we have, 


\small
\begin{align}
\tilde{s}_{{\rm csma},j} &= \tilde{x}_j + \log \bar{p}_{\rm e} + \log D_j - (\sigma \bar{p}_{\rm e}+\Delta(1-\bar{p}_{\rm e}))) \notag\\ & \qquad+\tilde{z} -\log(T_{\rm on}+c_1+e^{\tilde{z}}),\ j=1,\dots,n
\end{align}
\normalsize
where we have performed the change of variable $x_j=\tau_j/(1-\tau_j)$ resulting in $p_{succ,j}=x_j \bar{p}_{\rm e}$, and $\tilde{x}:=\log x$.

Rearranging the terms and putting the constraints in standard form, the proportional fair optimisation problem is:
\small
\begin{align}
&\max_{\stackrel{\tilde{s}_{{\rm csma},j},\tilde{s}_{{\rm txA}},}{\tilde{z} ,\tilde{x}_j}} \tilde{s}_{{\rm txA}}+\sum_{j=1}^n\tilde{s}_{{\rm csma},j}\notag\\
s.t.\quad & \tilde{s}_{{\rm txA}}-\log q +\log(T_{\rm on}+c_1+e^{\tilde{z}})\le 0,\\
& \tilde{s}_{{\rm csma},j} - \left(\tilde{x}_j + \log \bar{p}_{\rm e} + \log D_j - c_3(\tilde{z}) +\tilde{z} \right) \le 0, \label{eq:rate}\\
& \tilde{s}_{{\rm csma},j} -  \tilde{\bar{s}}_{{\rm csma},j}, \le 0,\ j=1,\dots,n \label{eq:load}\\
&\log \sum_{j=1}^n(1+e^{\tilde{{x}}_j }) \le -\log \bar{p}_{\rm e},\label{eq:idle}
\end{align}
\normalsize
where $c_3(\tilde{z}):= \log (\bar{p}_{\rm e} (\sigma - \Delta) + \Delta) + \log(T_{\rm on}+c_1+e^{\tilde{z}})$ and constraint (\ref{eq:load}) takes into account that the transmission probability is bounded by the offered load at a station (denoted by $\bar{s}_{{\rm csma},j}$).  Note that the CSMA/CA transmission attempt probabilities are now included as decision variables $x_j$, $j=1,\dots,n$ rather than being taken as constant as in the saturated case considered previously.

\subsection{Proportional Fair Allocation}

We begin by showing that at an optimum, constraint (\ref{eq:idle}) is tight.
\begin{lemma}\label{lem1}
Suppose $C\subset\{1,\cdots,n\}$, the set of CSMA/CA stations for which the optimal rate $\tilde{s}^*_{{\rm csma},j} < \tilde{\bar{s}}_{{\rm csma},j}$ is not empty.  Then at an optimum $\log \sum_{j=1}^n(1+e^{\tilde{x}^*_j }) = -\log \bar{p}_{\rm e}$ i.e.  $\frac{1}{\prod_{j=1}^n(1+x^*_j )} =\bar{p}_{\rm e} $.
\end{lemma}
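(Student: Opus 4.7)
The natural approach is proof by contradiction: assume constraint (\ref{eq:idle}) is slack at the optimum and exhibit a feasible perturbation that strictly improves the objective. Concretely, suppose $\log \sum_{j=1}^n (1+e^{\tilde{x}^*_j}) < -\log \bar{p}_{\rm e}$, and fix any $j_0 \in C$. By the defining property of $C$, the load cap (\ref{eq:load}) is slack at $j_0$, so $\tilde{s}^*_{{\rm csma},j_0} < \tilde{\bar{s}}_{{\rm csma},j_0}$; in addition, the rate constraint (\ref{eq:rate}) for $j_0$ is tight at the optimum, since otherwise we could already improve $\tilde{s}^*_{{\rm csma},j_0}$ alone.

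I would then perturb just two decision variables: raise $\tilde{x}_{j_0}$ to $\tilde{x}^*_{j_0}+\epsilon$ and $\tilde{s}_{{\rm csma},j_0}$ to $\tilde{s}^*_{{\rm csma},j_0}+\epsilon$ for a small $\epsilon>0$, holding every other variable at its starred value. The key structural observation is that constraint (\ref{eq:rate}) for index $j_0$ contains $\tilde{s}_{{\rm csma},j_0}$ and $\tilde{x}_{j_0}$ with matching unit coefficients, so the simultaneous increase leaves its slack unchanged; for $j\neq j_0$, constraint (\ref{eq:rate}) involves only $\tilde{x}_j$ and the shared quantity $\tilde{z}$ entering through $c_3(\tilde{z})$, both fixed, so it is unaffected. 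The load cap (\ref{eq:load}) at $j_0$ still holds for sufficiently small $\epsilon$ by the slackness hypothesis, and for $j\neq j_0$ it is unchanged. Constraint (\ref{eq:idle}) is preserved by continuity of its left-hand side in $\tilde{x}_{j_0}$ together with the assumed strict slackness. The constraint on $\tilde{s}_{{\rm txA}}$ involves neither the $\tilde{x}_j$ nor the $\tilde{s}_{{\rm csma},j}$, so it is trivially unaffected.

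The perturbed point is therefore feasible, yet the objective $\tilde{s}_{{\rm txA}}+\sum_{j=1}^n \tilde{s}_{{\rm csma},j}$ has strictly increased by $\epsilon$, contradicting the optimality of the starred values. Hence constraint (\ref{eq:idle}) must be tight at the optimum, which rearranges to the claimed identity $\prod_{j=1}^n(1+x^*_j) = 1/\bar{p}_{\rm e}$. I do not foresee any substantive obstacle; the only care required is in verifying the decoupling of (\ref{eq:rate}) across $j$, which is immediate once one notes that $c_3(\tilde{z})$ is a shared function of $\tilde{z}$ only and does not couple the transmission attempt variables $\tilde{x}_j$ among themselves.
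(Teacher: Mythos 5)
Your proof is correct and follows essentially the same route as the paper: a contradiction argument that, when constraint (\ref{eq:idle}) is slack, increases $\tilde{x}_{j_0}$ for a station with slack load constraint so that (\ref{eq:rate}) permits a strictly larger $\tilde{s}_{{\rm csma},j_0}$, improving the objective. Your simultaneous $\epsilon$-perturbation of $\tilde{x}_{j_0}$ and $\tilde{s}_{{\rm csma},j_0}$ is just a slightly more explicit version of the paper's sequential "loosen (\ref{eq:rate}), then raise $\tilde{s}_{{\rm csma},j_0}$" step.
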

\begin{proof}
We proceed by contradiction.  Suppose that $\tilde{x}^*_j$, $j=1,\cdots,n$ is an optimum and $\log \sum_{j=1}^n(1+e^{\tilde{x}^*_j }) < -\log \bar{p}_{\rm e}$.  We can therefore increase $\tilde x^*_j$ for one or more stations without violating constraint (\ref{eq:idle}).  Since for at least one station constraint (\ref{eq:load}) is loose, by increasing $\tilde x^*_j$ for that station then constraint (\ref{eq:rate}) becomes loose (since increasing $\tilde x^*_j$ increases the RHS of (\ref{eq:rate}) without changing the LHS).  Constraints (\ref{eq:idle}) and (\ref{eq:rate}) are now both loose for that station and so we can increase $\tilde{s}_{{\rm csma},j}$.  But increasing $\tilde{s}_{{\rm csma},j}$ improves the objective of the optimisation, contradicting the assumption that we are at an optimum.
\end{proof}

With Lemma~\ref{lem1} in hand we can now state the proportional fair rate allocation:
\begin{theorem}\label{th:two}
The proportional fair rate allocation assigns the following fraction of airtime to the CSMA/CA transmitters:
\begin{align}
\frac{\bar{T}_{{\rm off}}^* -c_1}{T_{\rm on}+\bar{T}_{{\rm off}}^*} =\frac{|C|+\sum_{j\notin C}\lambda^*_j}{1 + |C|+ \sum_{j\notin C}\lambda^*_j},
\end{align}
where $C$ is the set of saturated CSMA/CA stations (assumed to be non-empty) and $\lambda^*_j$ is the fraction of airtime used by CSMA/CA station $j$ relative to that used by a saturated CSMA/CA station (so $\lambda^*_j=1$ for saturated stations and $\lambda^*_j<1$ for unsaturated stations).
\end{theorem}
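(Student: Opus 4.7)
The plan is to parallel the KKT analysis used in the proof of Theorem~\ref{tm:one}, now with the expanded decision vector (which includes $\tilde{x}_j$) and the two extra constraints (\ref{eq:load}) and (\ref{eq:idle}). First I would verify the extended problem is still convex: the new terms are either linear or of softplus form $\log(1+e^{\tilde{x}_j})$ (convex in $\tilde{x}_j$), and $c_3(\tilde{z})-\tilde{z}$ is convex in $\tilde{z}$ by the same second-derivative argument used in Theorem~\ref{tm:one}. Slater's condition continues to hold, so strong duality applies.

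Next I would form the Lagrangian with multipliers $\theta,\mu_j,\nu_j,\gamma$ for the scheduled-throughput constraint, (\ref{eq:rate}), (\ref{eq:load}) and (\ref{eq:idle}) respectively, and write out the KKT stationarity conditions. Differentiation with respect to $\tilde{s}_{{\rm txA}}$ and $\tilde{s}_{{\rm csma},j}$ immediately gives $\theta^*=1$ and $\mu_j^*+\nu_j^*=1$. Differentiation with respect to $\tilde{z}$, using $c_3'(\tilde{z}) = e^{\tilde{z}}/(T_{\rm on}+c_1+e^{\tilde{z}})$ and combining the contributions of the scheduled-throughput constraint and (\ref{eq:rate}), yields the airtime identity
\begin{align*}
\frac{e^{\tilde{z}^*}}{T_{\rm on}+c_1+e^{\tilde{z}^*}} = \frac{\sum_{j=1}^{n}\mu_j^*}{1+\sum_{j=1}^{n}\mu_j^*}.
\end{align*}
Differentiation with respect to $\tilde{x}_j$ couples the multipliers across stations through $\mu_j^* = \gamma^*\tau_j^*$, where $\tau_j = e^{\tilde{x}_j}/(1+e^{\tilde{x}_j})$.

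Third, I would apply complementary slackness together with Lemma~\ref{lem1}. For $j\in C$ (saturated) constraint (\ref{eq:load}) is loose, so $\nu_j^*=0$ and $\mu_j^*=1$; for $j\notin C$ (unsaturated) constraint (\ref{eq:load}) is tight and $\mu_j^*\le 1$. Because $C\neq\emptyset$, Lemma~\ref{lem1} forces constraint (\ref{eq:idle}) to be tight, hence $\gamma^*>0$. From $\mu_j^*=\gamma^*\tau_j^*$ I then get $\tau_j^*/\tau^{sat,*}=\mu_j^*$, where $\tau^{sat,*}=1/\gamma^*$ denotes the common transmission probability of saturated stations.

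Finally I would identify $\mu_j^*$ with the physical airtime ratio $\lambda_j^*$. Since the per-MAC-slot airtime consumed by CSMA/CA station $j$ during the $\bar{T}_{{\rm off}}$ period is proportional to $\tau_j$, the ratio relative to a saturated station is $\lambda_j^*=\mu_j^*$ for $j\notin C$ and trivially $\lambda_j^*=1$ for $j\in C$. Substituting $\sum_j\mu_j^*=|C|+\sum_{j\notin C}\lambda_j^*$ into the airtime identity, and noting that the LHS equals $(\bar{T}_{{\rm off}}^*-c_1)/(T_{\rm on}+\bar{T}_{{\rm off}}^*)$, delivers the theorem's formula. The main obstacle I expect is this last identification of the dual variable $\mu_j^*$ with the airtime ratio $\lambda_j^*$; everything else is a direct extension of the Theorem~\ref{tm:one} calculation.
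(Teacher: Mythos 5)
Your KKT skeleton is the paper's: convexity plus Slater, the same Lagrangian (your $\mu_j,\nu_j$ are the paper's $\lambda_j,\mu_j$), stationarity giving $\theta^*=1$ and $\mu_j^*+\nu_j^*=1$, the $\tilde z$ condition $e^{\tilde z^*}/(T_{\rm on}+c_1+e^{\tilde z^*})=\sum_j\mu_j^*/(1+\sum_j\mu_j^*)$, and complementary slackness forcing the rate-constraint multiplier to $1$ for saturated stations. Up to that point you match the paper step for step.

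The gap is exactly where you flagged it: identifying the multiplier on (\ref{eq:rate}) with the theorem's $\lambda_j^*$. The paper makes this identification through the \emph{successful-transmission} airtime, equation (\ref{eq:succairtime}): per MAC slot station $j$'s success airtime is proportional to $\frac{\tau_j}{1-\tau_j}p_{\rm e}=x_jp_{\rm e}$, i.e.\ to $x_j$, not to $\tau_j$. Invoking Lemma~\ref{lem1} to freeze $p_{\rm e}$ at $\bar p_{\rm e}$, the paper shows ${T}^*_{{\rm csma},j}$ is proportional to its multiplier with a $j$-independent constant, and hence that the multiplier equals ${T}^*_{{\rm csma},j}/{T}^*_{{\rm csma},C}$, the \emph{success}-airtime ratio --- which is how the theorem's ``fraction of airtime used'' is meant (see the paper's Discussion). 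Your final step instead asserts that station $j$'s airtime is proportional to $\tau_j$ and sets $\lambda_j^*=\mu_j^*=\tau_j^*/\tau^{sat,*}$. That is the gross channel-occupancy ratio (collisions included); it equals the success-airtime ratio $x_j^*/x^{sat,*}$ only up to the factor $(1-\tau^{sat,*})/(1-\tau_j^*)$, i.e.\ only when the $\tau$'s are equal or small. Relatedly, your relation $\mu_j^*=\gamma^*\tau_j^*$ comes from reading (\ref{eq:idle}) as $\sum_j\log(1+e^{\tilde x_j})\le-\log\bar p_{\rm e}$, whereas the paper differentiates the form it wrote and then applies Lemma~\ref{lem1} to arrive at $\lambda_j^*=\gamma^*x_j^*\bar p_{\rm e}$; whichever form you commit to, you must carry it consistently into the airtime computation, and it is the $x_j$-proportional success-airtime calculation (this is also where Lemma~\ref{lem1} is actually needed), not the $\tau_j$ proportionality, that closes the proof as the paper states it. One smaller slip: tightness of (\ref{eq:idle}) does not by itself give $\gamma^*>0$ (complementary slackness runs the other way); positivity follows from the $\tilde x_j$ stationarity condition combined with $\mu_j^*=1$ for some $j\in C$, $C\neq\emptyset$.
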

\begin{proof}
The optimisation is jointly convex in the decision variables and the Lagrangian is,
\small
\begin{align*}
L=&-\tilde{s}_{{\rm txA}}-\sum_{j=1}^n\tilde{s}_{{\rm csma},j} 
+\theta(\tilde{s}_{{\rm txA}}-\log q +\log(T_{\rm on}+c_1+e^{\tilde{z}}))\\
&+ \sum_{j=1}^n\lambda_j \left(\tilde{s}_{{\rm csma},j} - \left( \tilde{x}_j + \log \bar{p}_{\rm e} + \log D_j - c_3(\tilde{z}) +\tilde{z}\right) \right)\\
&+ \sum_{j=1}^n\mu_j \left(\tilde{s}_{{\rm csma},j} -  \tilde{\bar{s}}_{{\rm csma},j}\right)
+\gamma(\log \sum_{j=1}^n(1+e^{\tilde{{x}}_j }) +\log \bar{p}_{\rm e} ).
\end{align*}
\normalsize
The main KKT conditions are:

\small
\begin{align}
&-1+\theta^* = 0, \quad
-1+\mu^*_j+\lambda^*_j = 0,\ j=1,\dots,n \nonumber\\
&(\theta^* + \sum_{j=1}^n\lambda^*_j)\frac{e^{\tilde{z}^*}}{T_{\rm on}+c_1+e^{\tilde{z}^*}}-\sum_{j=1}^n\lambda^*_j=0,\nonumber\\
&-\lambda^*_j+\gamma^*\frac{e^{\tilde{x}^*_j}}{\sum_{j=1}^n(1+e^{\tilde{x}^*_j }) } = 0.\nonumber
\end{align}
\normalsize
Therefore at an optimum we have $\theta^* = 1$ and $\mu^*_j = 1-\lambda^*_j$. Letting $C\subset\{1,\cdots,n\}$ be the set of CSMA/CA stations for which the optimum rate $\tilde{s}^*_{{\rm csma},j} < \tilde{\bar{s}}_{{\rm csma},j}$ (the set of saturated stations), then by complementary slackness we have $\mu^*_j=0$, $j\in C$ and so $\lambda^*_j=1$, $j\in C$.  Combining the above with the KKT condition for $\tilde{z}$ we have that at an optimum,
\begin{align*}
&\frac{e^{\tilde{z}^*}}{T_{\rm on}+c_1+e^{\tilde{z}^*}} = \frac{\bar{T}_{{\rm off}}^* -c_1}{T_{\rm on}+\bar{T}_{{\rm off}}^*} =\frac{|C|+\sum_{j\notin C}\lambda^*_j}{1 + |C|+ \sum_{j\notin C}\lambda^*_j},
\end{align*}
 where $\lambda^*_j=\gamma^*\frac{e^{\tilde{x}^*_j}}{\sum_{j=1}^n(1+e^{\tilde{x}^*_j }) } = \gamma^*\frac{x_j^*}{\sum_{j=1}^n(1+x_j^*) }$.
 
It remains to obtain $\lambda^*_j$ for $j \notin C$ (for the set of unsaturated stations).  We proceed by noting that the airtime used by CSMA/CA station $j$ for successful transmissions is given by
\begin{align}
{T}_{{\rm csma},j} &= \frac{\frac{\tau_j}{1-\tau_j}p_{\rm e}\Delta}{\sigma p_{\rm e} + \Delta(1-p_e)} \frac{z}{T_{\rm on}+c_1+z},\label{eq:succairtime}
\end{align} 
where $p_{\rm e}:=\prod_{k=1}^n(1-\tau_k)=\frac{1}{\prod_{k=1}^n(1+e^{\tilde{{x}}_k })}$ is the idle probability. 
By Lemma~\ref{lem1} at an optimum
\begin{align}
{T}^*_{{\rm csma},j} &=x^*_j\bar{p}_{\rm e}\frac{\Delta}{c_4} \frac{z^*}{T_{\rm on}+c_1+z^*},
\end{align} 
with $c_4 = (\bar{p}_{\rm e} (\sigma - \Delta) + \Delta)$. From the KKT conditions it follows that $\lambda^*_j= x^*_j \gamma^* \bar{p}_{\rm e}$.  Hence,
\begin{align}
{T}^*_{{\rm csma},j} &=\lambda^*_j\frac{\Delta}{\gamma^* c_4} \frac{z^*}{T_{\rm on}+c_1+z^*}.
\end{align}
For a station $j\in C$ we know that $\lambda^*_j=1$ and so at an optimum ${T}^*_{{\rm csma},j} = {T}^*_{{\rm csma},C} :=\frac{\Delta}{\gamma c_4} \frac{z^*}{T_{\rm on}+c_1+z^*}$.  Observe that the optimal airtime ${T}^*_{{\rm csma},j}$ has the same value for all $j\in C$ since $\gamma$ does not depend on $j$ i.e. all stations unconstrained by offered load are allocated the same success airtime.  For the stations $j\notin C$ from the above analysis we have $\lambda^*_j = \frac{{T}^*_{{\rm csma},j}}{{T}^*_{{\rm csma},C}}<1$ while for stations $j\in C$ $\lambda^*_j = \frac{{T}^*_{{\rm csma},j}}{{T}^*_{{\rm csma},C}}=1$.   That is, $\lambda^*_j = \frac{{T}^*_{{\rm csma},j}}{{T}^*_{{\rm csma},C}}$ for all stations, and this is just the fraction of airtime used by CSMA/CA station $j$ relative to that used by a CSMA/CA station unconstrained by offered load.  
%
\end{proof}

\subsection{Discussion}
It can be seen that Theorem~\ref{th:two} reduces to Theorem~\ref{tm:one} when all of the CSMA/CA stations are saturated.  Further, Theorem~\ref{th:two} has an elegant channel time interpretation.  Namely, it states that the airtime allocated to the scheduled network is the same as the total channel time effectively used by the CSMA/CA network divided among an \emph{equivalent} number of stations. This \emph{equivalent} number of CSMA/CA stations is computed based on the proportion of success airtime of an unsaturated station compared to a saturated one.

\subsection{Example: Unlicensed LTE and 802.11}

We revisit here again the example in Section~\ref{sec:exampleLTEU} to illustrate the results obtained in the analysis above when there is a set of unsaturated WiFi stations. The same parameters as in the former example are used except that here in order to avoid consideration of the impact of the buffer dynamics, we assume no aggregation. To compute $\bar{p}_{\rm e}$ as well as $\tau_j,\ j=1,\dots,n$ for a given WiFi  configuration, we have used a standard unsaturated model~\cite{bellalta2005simple}, considering that the buffer size is now limited to 5 packets of 1500 bytes. 

Fig.~\ref{fig:throughput_prop_fair_unsat} shows for CSAT the proportion of allocated channel time and successful airtime for the cases where: \emph{i)} $n=3, |C|=2$ and, \emph{ii)} $n=9, |C|=5$. In the former, the unsaturated offered load equals 10 Mbps while in the latter it is set to 3 Mbps. In the figure, the \emph{Assigned Channel Time} proportion is equal to ${T}_{{\rm on}}/(T_{\rm on}+\bar{T}_{{\rm off}})$ for LTE. In the case of WiFi, we define the \emph{Assigned Channel Time} to be $(1-\bar{T}_{{\rm off}})/n_{\rm eq}$ for a saturated WiFi station and $\lambda^*_j(1-\bar{T}_{{\rm off}}^*)/n_{\rm eq}$ for an unsaturated one, whith $n_{\rm eq}=|C|+\sum_{j\notin C}\lambda^*_j$. This allows us to evaluate the channel time \emph{allocated} to a station, although that includes empty periods and collision airtimes for WiFi, which are in fact shared among all WiFi stations. The successful airtime is also depicted in Fig.~\ref{fig:throughput_prop_fair_unsat} and it corresponds to $({T}_{{\rm on}}-c1)/(T_{\rm on}+\bar{T}_{{\rm off}})$ for LTE and to (\ref{eq:succairtime}) for WiFi.

We can see in Fig.~\ref{fig:throughput_prop_fair_unsat} how the solution of the proportional fair optimisation problem assigns equal channel times to LTE and to a saturated WiFi station. We can also note the inefficiency of the WiFi channel access and how its cost is not shared equally among all WiFi stations but varies proportionally to the offered load.  

\begin{figure}
\centering
\subfigure[$n=3, |C|=2$]{\includegraphics[width=0.65\columnwidth]{./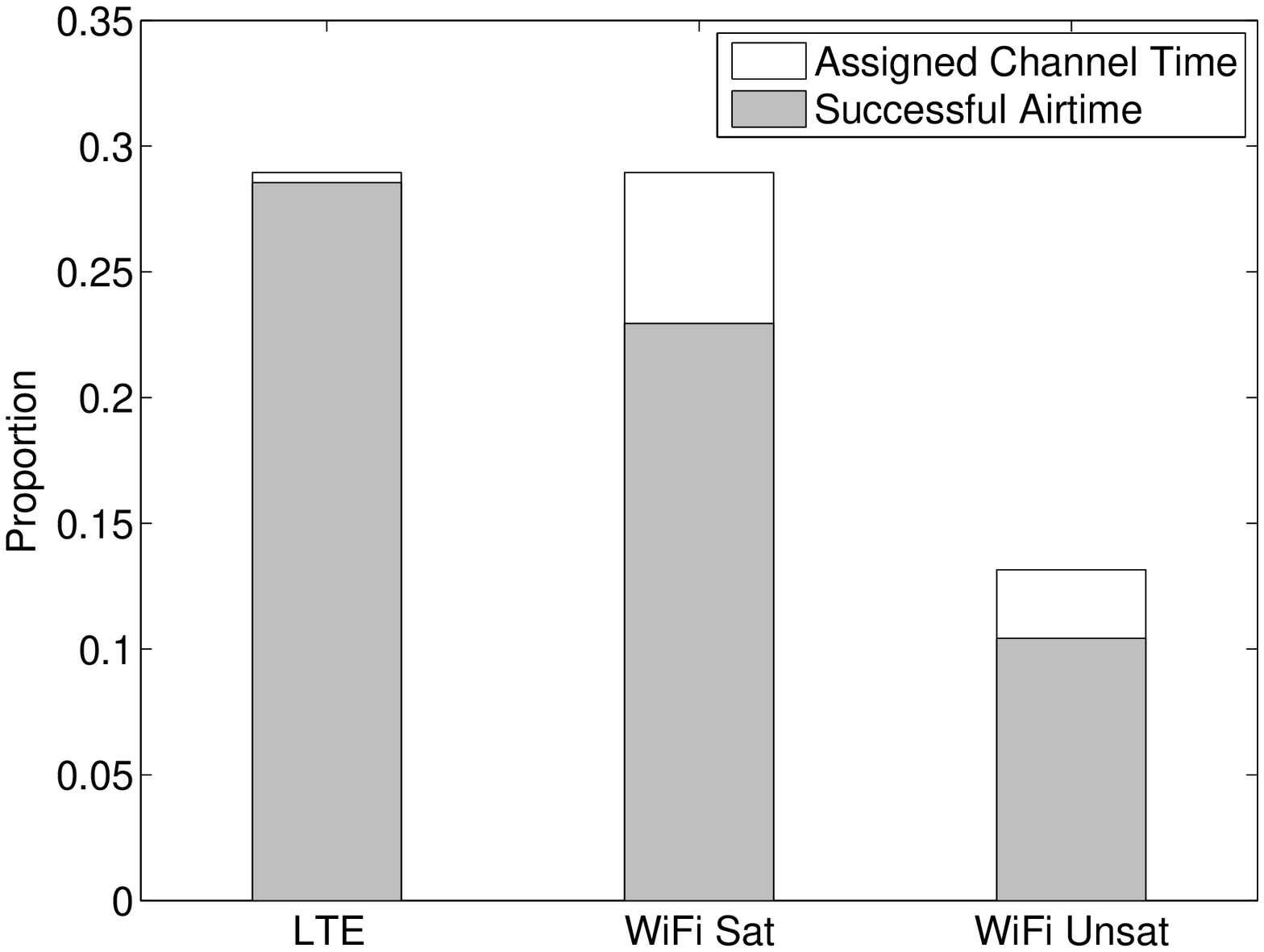}}
\subfigure[$n=9, |C|=5$]{\includegraphics[width=0.65\columnwidth]{./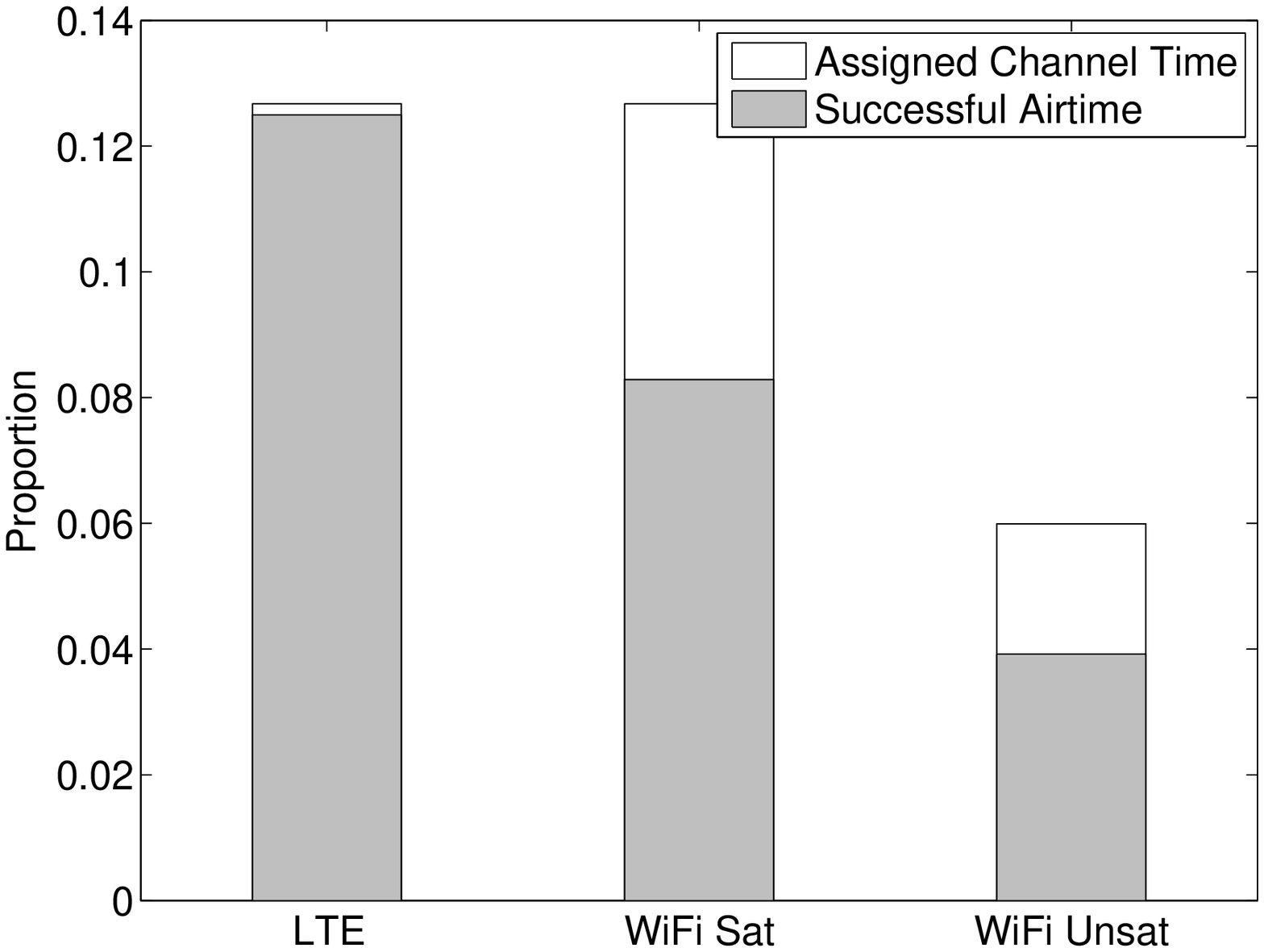}}
\caption{Resulting airtimes using the proportional fair allocation result for different configurations of $n$ and $C$.}
\label{fig:throughput_prop_fair_unsat}
\end{figure}

\section{Imperfect Inter-Technology Transmission Detection}\label{sec:carrier_sense}

As already noted, our focus here is on random access transmitters which use carrier sensing to define MAC slots.  In the foregoing analysis we have assumed that the random access transmitters can use their carrier sensing ability to also detect scheduled transmissions in fixed time slots and so defer their channel attempts while scheduled transmissions take place.   In this section we now relax this assumption and consider imperfect sensing of scheduled transmissions by the random access transmitters.

To help motivate this analysis we begin by giving a brief overview of the carrier sensing used by 802.11 transmitters and then present experimental measurements evaluating the effectiveness of this carrier sensing at detecting LTE transmissions in the unlicensed band.   Use of experimental measurements rather than simulations is important since not only is signal propagation indoors complex but also physical carrier sensing functionality is typically hardware-dependent.   






\subsection{Transmission Detection in 802.11}



The 802.11 standard mandates two types of detection of ongoing transmissions, namely virtual carrier sensing and physical carrier sensing.  Virtual carrier sensing operates at the MAC layer.  Transmitters set a duration field in the MAC header and receivers set a Network Allocation Vector (NAV) timer accordingly to mark the channel as busy for the duration requested in the transmitted frame.
Physical carrier sensing is carried out at the PHY layer and employs one or more of the following methods: (i) Energy Detection, which declares the channel to be busy when the received energy rises above a specified threshold, (ii) Weak Carrier Sensing, which detects the presence of OFDM transmissions, and (iii) Preamble Reception, which decodes the PLCP preamble to extract the \emph{Length} parameter which states the duration of the subsequent transmission.   

For detection of non-802.11 transmissions it is primarily physical carrier sensing using Energy Detection that is relevant since virtual carrier sensing and Preamble Reception are both 802.11-specific and Weak Carrier Sensing may also use 802.11-specific OFDM features.  In general, Energy Detection is the least sensitive form of carrier sensing as it makes use of energy measurements instead of decoded information and is therefore prone to false negatives unless the energy detection threshold is set sufficiently high.

\subsection{Testbed Hardware and Software Setup}

We constructed a small test-bed to assess the ability of WiFi devices to detect unlicensed-band LTE transmissions. 

\subsubsection{LTE SDR Transmitter} 
We used an Ettus USRP~B210 board, connected via an USB~3.0 interface to a standard PC (Intel Core i7) running Linux Ubuntu Trusty, with the \texttt{uhd\_driver} and version 1.0.0 of \texttt{srsLTE},\footnote{\url{https://github.com/srsLTE/srsLTE}} which is a free, open-source LTE library for implementing both an UE and an eNodeB.  The USRP board acts as eNodeB, configured to use 100~physical resource blocks (i.e., 20~MHz bandwidth) in the 5~GHz band, MCS index 0 and implementing a periodic duty-cycle channel access scheme, similar to the proposed CSAT coexistence mechanism~\cite{sadek2015extending}. This was achieved by modifying the \texttt{srsLTE} example program \texttt{src/examples/pdsch\_enodeb.c} to fix an active interval during which data is transmitted, followed by a silent period of random duration. The average total duration of the active plus silent periods is set equal to 100~ms, and by varying the mean duration of the silent period we effectively vary the \emph{duty cycle} of the LTE link.  The 
LTE transmission power is also varied.

\subsubsection{WiFi SDR Transmitter} 
To provide a baseline comparison, in our experiments we also operate the USRP board as an IEEE 802.11a transceiver\footnote{\url{https://github.com/bastibl/gr-ieee802-11}}.  We generate a similar \emph{on-off} WiFi transmission pattern as the one used for the LTE transmissions, in the same frequency channel and using the 6~Mbps MCS (which is the closest 802.11-compliant modulation and coding scheme to that of the LTE transmissions).   The \emph{on-off} WiFi transmissions are generated by transmitting WiFi packets in bursts during the \emph{on} periods (with no idle time between packets, i.e. no DIFS and no random backoff), while remaining silent during the random \emph{off} periods. 

\subsubsection{WiFi Receiver} 
WiFi channel sensing is performed by a Soekris net6501-70 device equipped with an Atheros~AR9390-based 802.11a card, running Linux Ubuntu (kernel 3.13) and the \texttt{ath9k} wireless driver.  We take advantage of this driver's monitoring capabilities to obtain the status of the medium as detected by the wireless card. This is achieved by leveraging \texttt{ath9k}'s 32-bit register counters \texttt{AR\_CCCNT}, \texttt{AR\_RCCNT}, and \texttt{AR\_RFCNT}, which count, respectively, the total number of cycles elapsed (with a 44~MHz resolution), the ones where the medium is marked as busy and the ones where there is an ongoing frame. With this data, we are able to measure the ``CCA state'' of the wireless medium (a similar approach is also used by \texttt{RegMon}~\cite{regmon}).

\begin{figure}
\centering
\subfigure[CSAT-based]{\includegraphics[width=\columnwidth]{./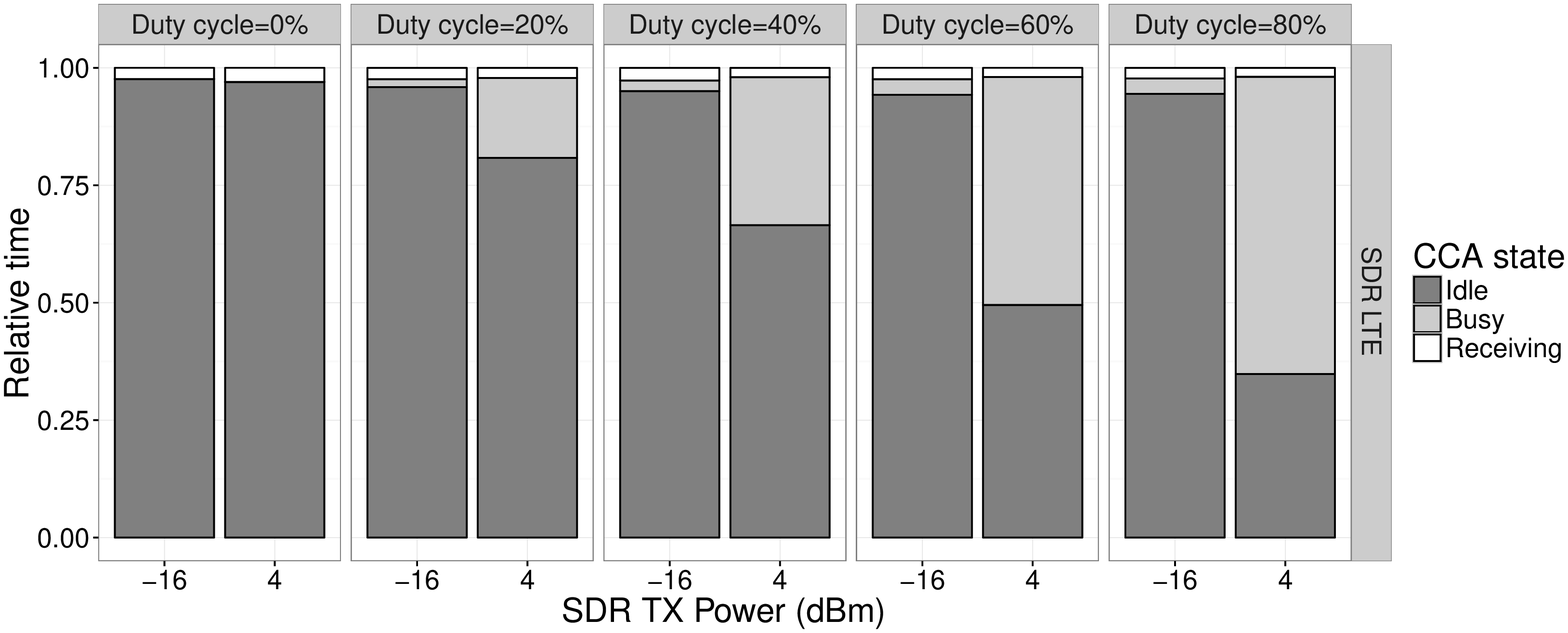}\label{fig:cca_exp_csat}}
\subfigure[802.11a]{\includegraphics[width=\columnwidth]{./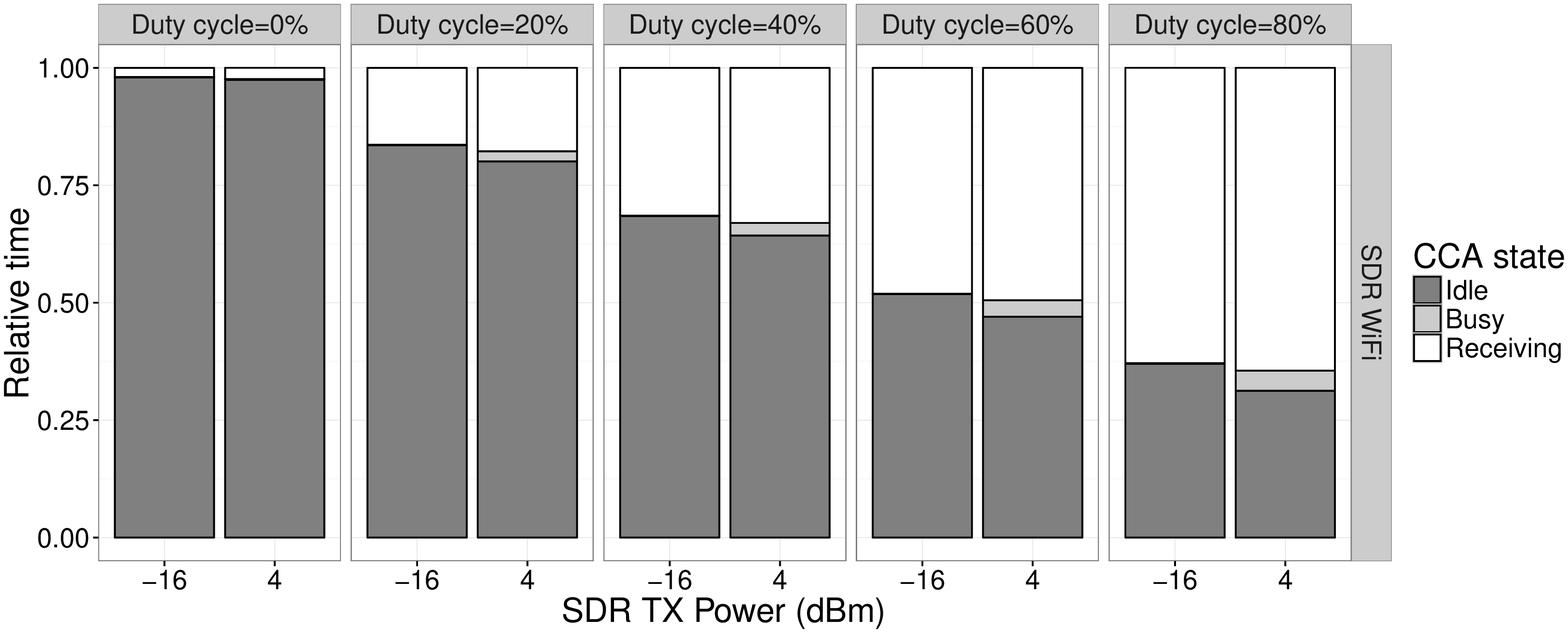}\label{fig:cca_exp_wifi}}
\caption{WiFi Atheros CCA states with an SDR CSAT-based implementation and  an SDR 802.11a implementation.}
\label{fig:cca_exp}
\end{figure}

\subsection{Experimental Measurements}

Fig.~\ref{fig:cca_exp_csat} plots the measured ``CCA state'' at the WiFi receiver as the configuration (duty cycle and transmit power) of the LTE transmitter is varied.   We start by considering the case with -16~dBm transmit power.  We note that, given the small size of the testbed, for this configuration the signal quality was very good (we confirmed this using another Ettus board, configured as an unlicensed LTE UE).  Nevertheless, it can be seen that the WiFi card consistently marks the channel as roughly 90\% idle even when the LTE duty cycle is 80\%.   When the LTE transmit power is increased substantially to 4~dBm, it can be seen that the situation changes and the WiFi card now marks the channel as becoming increasingly busy as the LTE duty cycle is increased. These measurements therefore show that WiFi carrier sensing fails to work at lower LTE transmit powers.

For comparison Fig.~\ref{fig:cca_exp_wifi} shows the corresponding measurements when the Ettus board is configured as a WiFi transmitter.   It can be seen that the WiFi card correctly detects the medium as being occupied by 802.11 transmissions for both values of the transmit power, closely following the duty cycle. 

These measurements demonstrate that the Energy Detection physical carrier sensing used by WiFi to detect LTE transmissions can be much less sensitive than the carrier sensing used by a WiFi station to detect WiFi transmissions.   

\begin{figure*}[hhht!] 
\centering
\subfigure[$n=1,T_{\rm on}=10$ms]{\includegraphics[width=0.65\columnwidth]{./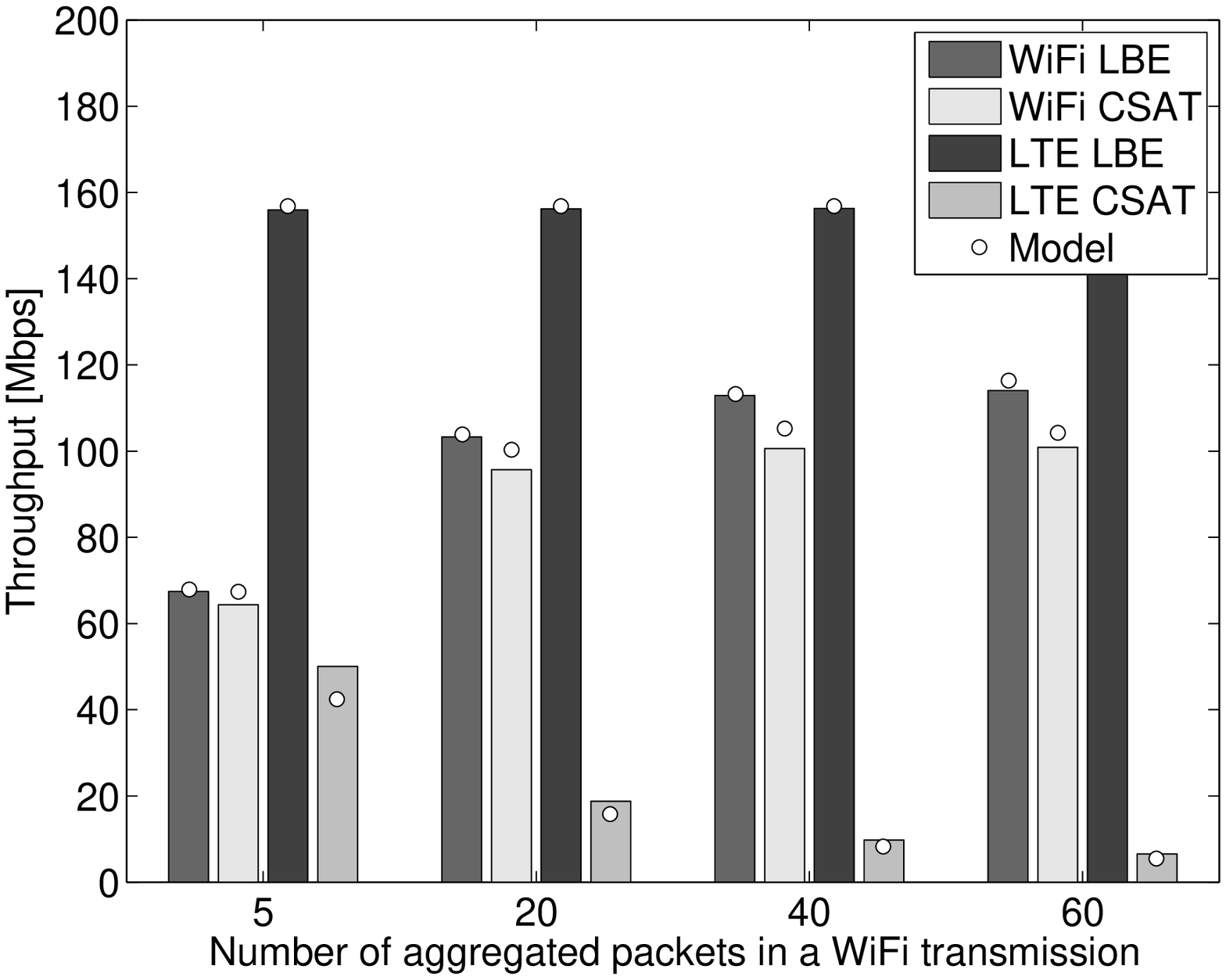}}
\subfigure[$n=3,T_{\rm on}=10$ms]{\includegraphics[width=0.65\columnwidth]{./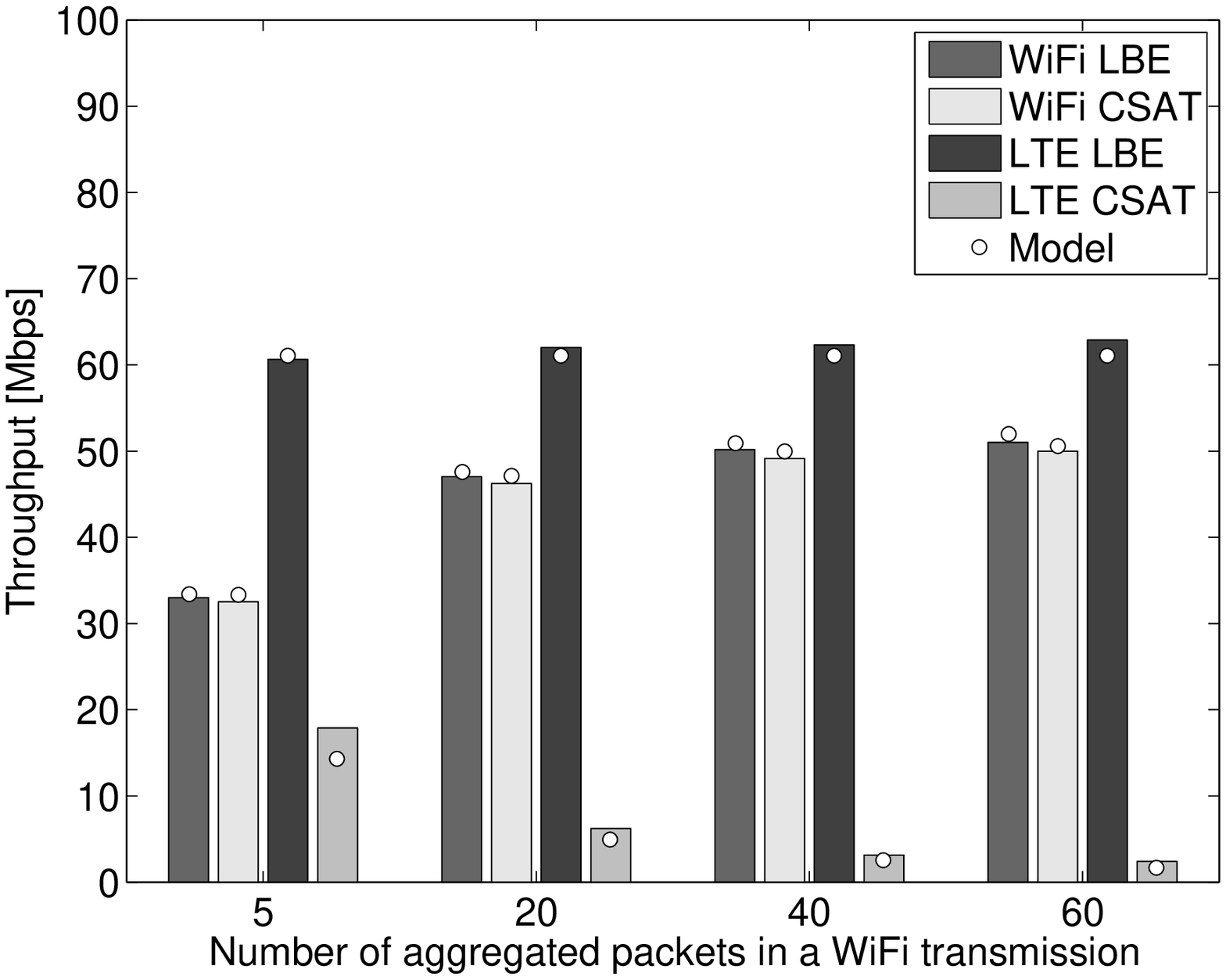}}
\subfigure[$n=9,T_{\rm on}=10$ms]{\includegraphics[width=0.65\columnwidth]{./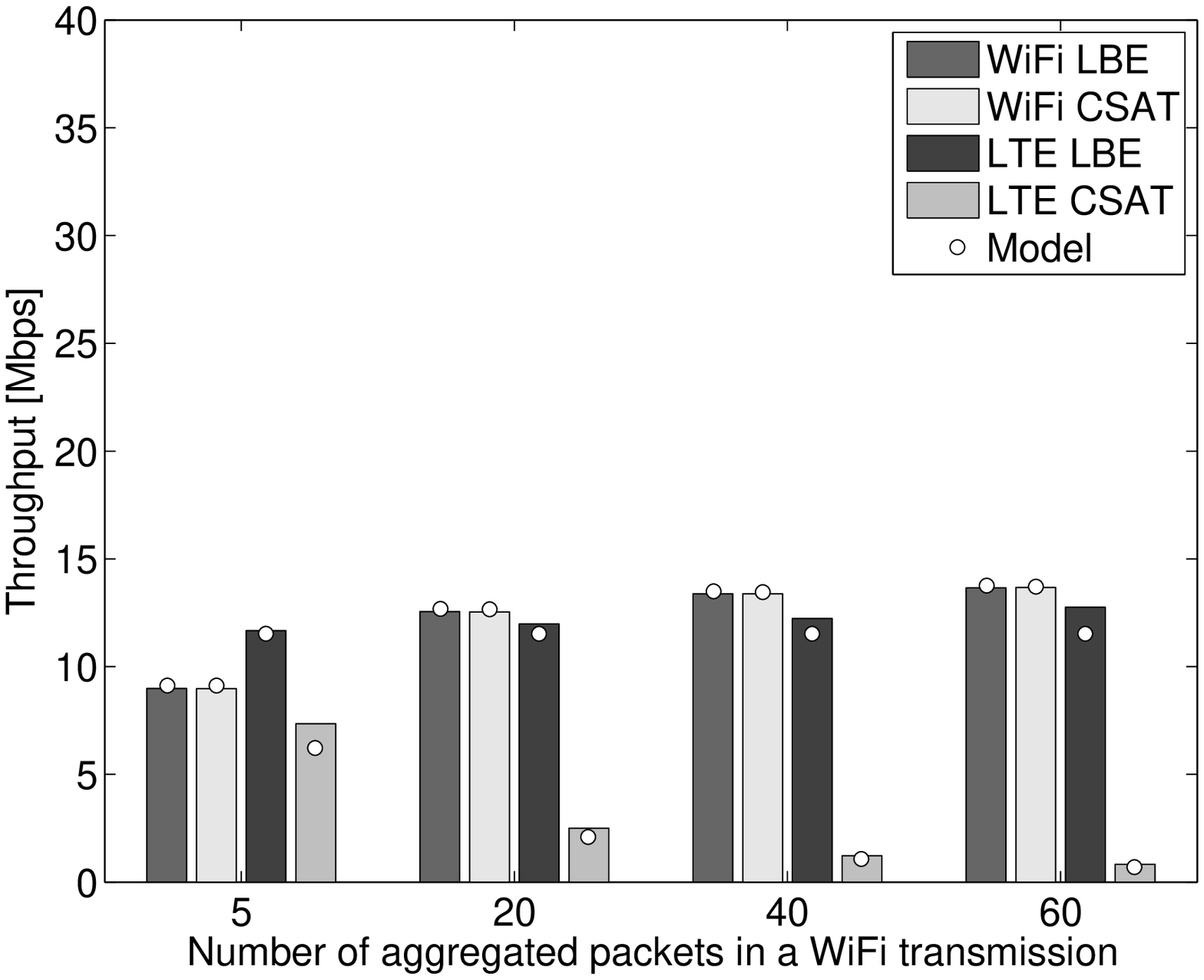}}\\
\subfigure[$n=1,T_{\rm on}=50$ms]{\includegraphics[width=0.65\columnwidth]{./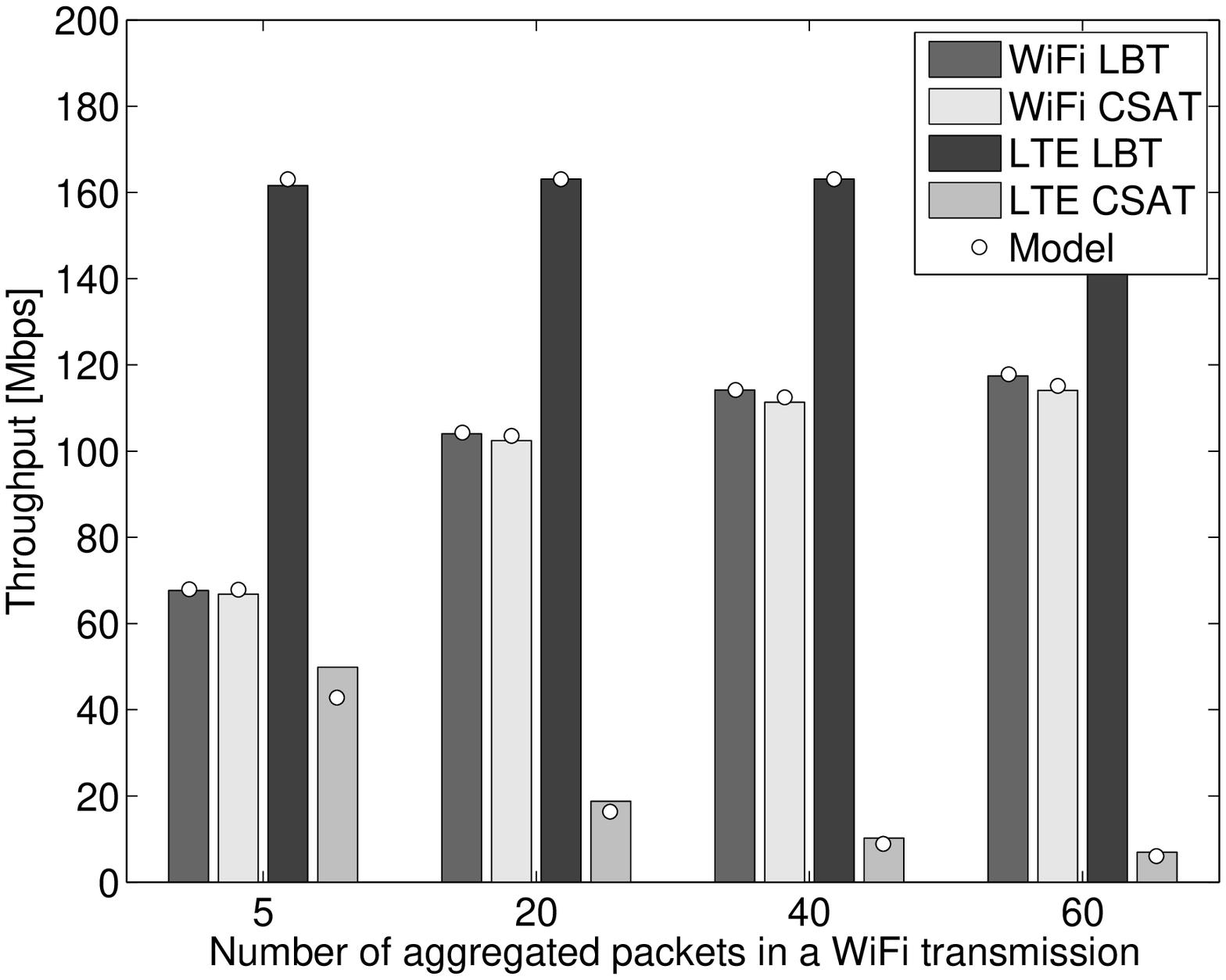}}
\subfigure[$n=3,T_{\rm on}=50$ms]{\includegraphics[width=0.65\columnwidth]{./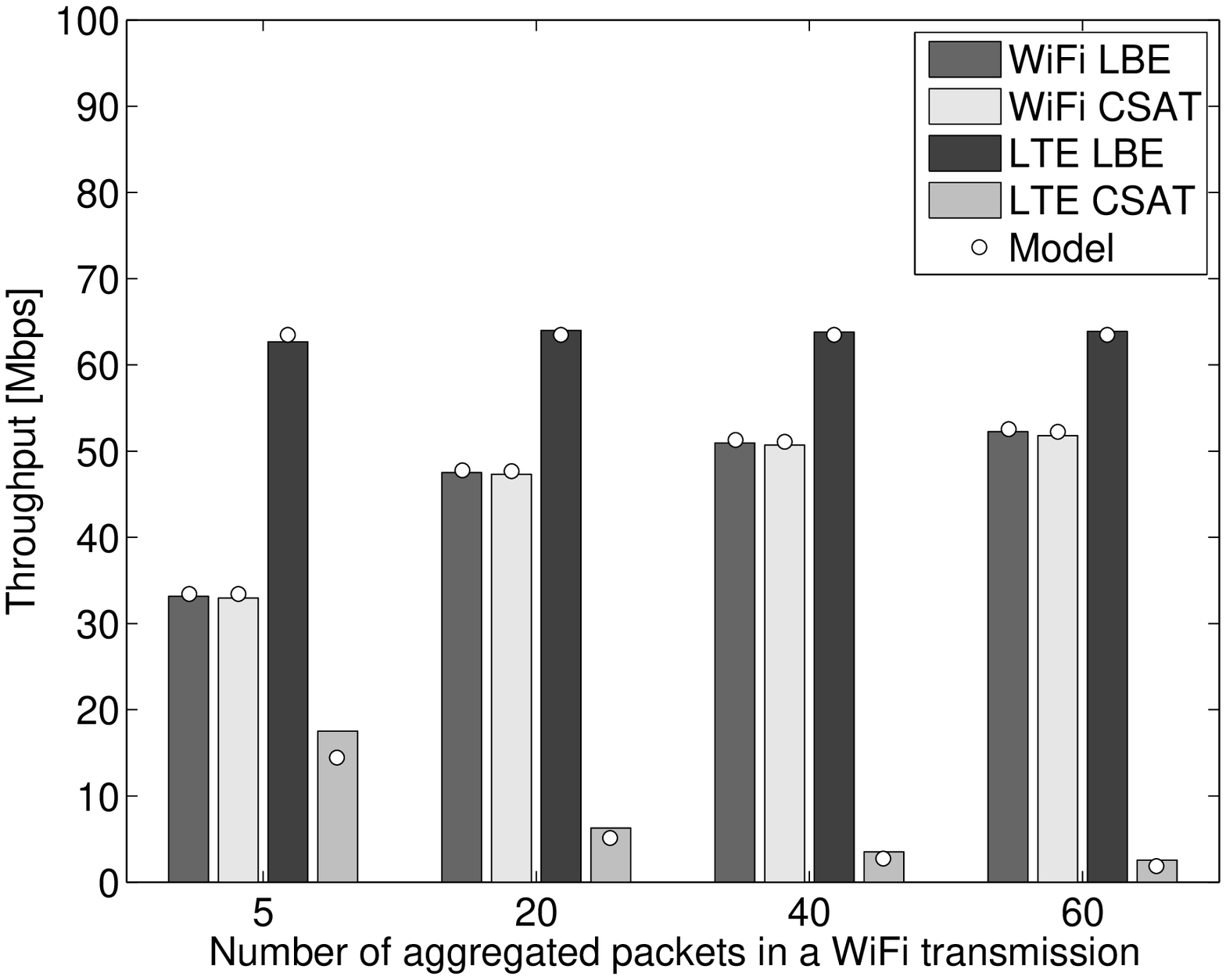}}
\subfigure[$n=9,T_{\rm on}=50$ms]{\includegraphics[width=0.65\columnwidth]{./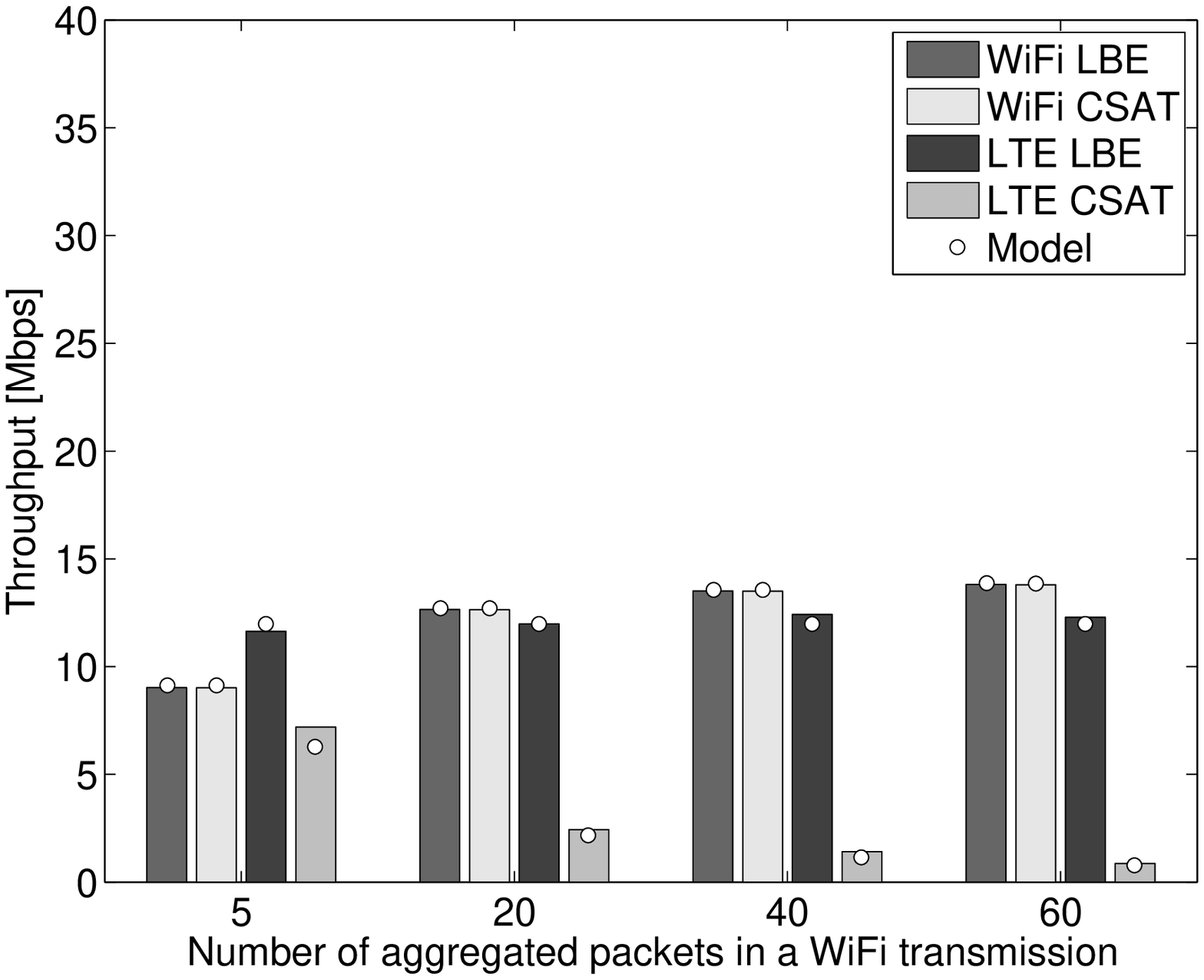}}
\caption{Throughput with imperfect carrier sensing for different configurations of $n$ and $T_{\rm on}$ while varying $n_{\rm agg}$ (effectively changing the packet size of WiFi transmissions). Simulation results are averages of $100$ simulation runs with $50$ s time horizon.}
\label{fig:throughput_prop_fair_nonidealcs}
\end{figure*}

\subsection{Explicit Communication}

The foregoing experimental measurements demonstrate the potential for much reduced sensitivity of random access carrier sensing when detecting scheduled transmissions rather than other random access transmissions, and therefore raise concerns regarding imperfect sensing of scheduled transmissions.  When physical carrier sensing of scheduled transmissions by random access transmitters is imperfect then interference (``collisions'') between scheduled and random access transmissions may increase substantially and so reduce network throughput and quality of service.   For example, in the extreme case where random access transmitters cannot detect scheduled transmissions at all then scheduled throughput is likely to be much reduced.   

One solution is to enable virtual carrier sensing via explicit communication between the scheduled and random access transmitters.  This might be achieved, for example, by modifying scheduled transmitters to transmit a signal decodable by the random access network at the start of a $T_{\rm on}$ period to announce the duration of the transmission e.g. the scheduled transmitter might send a WiFi CTS-to-self packet (a 802.11 frame defined for backward compatibility). The use of explicit communication using the CTS-to-self in the context of unlicensed LTE and WiFi has been proposed in~\cite{sadek2015extending}. Upon successfully receiving this signal, the CSMA/CA transmitters will defer transmissions until the start of the next $T_{\rm off}$ interval.  However, when such explicit communication is used for inter-network detection the explicit signaling is prone to loss.  For example this may occur when the scheduled and random access transmitters begin transmission simultaneously at the start of a $T_{\rm on}$ 
period, leading to a collision. In this case, the concern is that random access transmitters will then be incapable of correctly decoding the signal sent by the scheduled transmitter.      


\subsection{Throughput Model with Explicit Communication}\label{sec:appendix_model_imperfectcs}

In this section we extend the throughput model in Section~\ref{sec:appendix_model} to include explicit communication between the scheduled and random access transmitters at the start of each $T_{\rm on}$ period, e.g. via the scheduled transmitter sending a CTS-to-self.   The main difference from before is that we now need to take account of the fact that when a collision occurs at the start of a $T_{\rm on}$ period then the signal from the scheduled transmitter is lost and so the random access transmitters may continue transmitting during the $T_{\rm on}$ period rather than deferring to the scheduled transmission. For illustrative purposes we consider the physical carrier sensing by the CSMA/CA network to be completely ineffective to detect LTE transmissions.

\vspace{0.15cm}

\subsubsection{CSMA/CA Throughput}

We first note that the CSMA/CA throughput during a $T_{\rm on}$ period is the same regardless of signalling from the scheduled network.  Namely, either the CSMA/CA transmitters remain silent during the $T_{\rm on}$ period, and so no data is received, or the CSMA/CA transmissions during the $T_{\rm on}$ period collide with the scheduled transmissions and are lost, again with no data being received.   However, when carrier sensing is imperfect then there may now be a partial collision at the end of the $T_{\rm on}$ period that extends beyond the scheduled transmission duration, affecting the value of $\E[\hat{T}_{{\rm off}}]$ (the mean time during which successful CSMA/CA transmissions are possible).\newline

\underline{Preemptive Approach}

Assuming that \emph{(i)} on average the collisions at the end of the $T_{\rm on}$ period occur half-way through a CSMA/CA transmission, \emph{(ii)} collisions at the start of the $T_{\rm on}$ period are independent of collisions at the end of $T_{\rm on}$ (which should hold when $T_{\rm on} \gg \Delta$) and \emph{(iii)} these collisions occur with probability $p_{{\rm txA}} = \frac{(1-p_{\rm e})\Delta }{\E[M]}$, then:
\begin{align}
\E[\hat{T}_{{\rm off}}]=\bar{T}_{{\rm off}} - \frac{\Delta}{2}p_{{\rm txA}}(1-p_{{\rm txA}}) - \Delta p_{{\rm txA}}^2.\label{eq:newToff}
\end{align}

\underline{Opportunistic Approach}

Similarly as for the \emph{Preemptive} approach but considering that now partial collisions can only occur at the end of the $T_{\rm on}$ period, we have

\begin{align}
\E[\hat{T}_{{\rm off}}]=\bar{T}_{{\rm off}} - \frac{\Delta}{2}(1-p_{\rm e})p_{{\rm txA}},
\end{align}
where once again $p_{{\rm txA}} = \frac{(1-p_{\rm e})\Delta }{\E[M]}$.\newline

\vspace{0.15cm}

\subsubsection{Scheduled Network Throughput}
For simplicity we assume that the maximum idle space left between random access transmissions is smaller than the duration of a scheduled slot $\delta$. Thus, when a collision occurs at the beginning of a $T_{\rm on}$ period then all of the scheduled frames in a slot are lost and we have the following.\newline

\underline{{Preemptive Approach}}

We transmit successfully during $T_{\rm on}$ only in case of no collision with a CSMA/CA station:

\begin{align}
s_{{\rm txA}} =r\frac{T_{\rm on}(1-p_{{\rm txA}})} {T_{\rm on}+\bar{T}_{{\rm off}}}.
\end{align}

\underline{{Opportunistic Approach}} 

Only the duration $T_{\rm on} - T_{\rm res}$ adds to throughput as:

\begin{align}
s_{{\rm txA}} &=r\frac{(T_{\rm on} - T_{\rm res})(1-p_{{\rm txA}})}{T_{\rm on}+\bar{T}_{{\rm off}}},
\end{align}
with $p_{{\rm txA}} = 1-p_{\rm e}$ as in Section~\ref{sec:appendix_model}. Note that when there is not a collision with a CSMA/CA node then $c_2=T_{\rm res}$.

\subsection{Example: LTE and WiFi with Virtual Carrier Sensing}

%
%
%
Using CTS-to-self as signalling approach, Fig.~\ref{fig:throughput_prop_fair_nonidealcs} shows the proportional fair result (as in Theorem~\ref{tm:one}) obtained using simulations and the analysis of throughput presented above.  This can be compared with Fig.~\ref{fig:throughput_prop_fair} but with the difference that WiFi now only defers to LTE when no LTE/WiFi collision occurs at the start of the $T_{\rm on}$ period, i.e. assuming that WiFi only defers to LTE upon correct reception of the CTS-to-self.  It can be seen in Fig.~\ref{fig:throughput_prop_fair_nonidealcs}, that the throughput of WiFi remains practically unchanged.  However, the LTE throughput is severely penalised for all configurations when using CSAT.  The reason for this is the considerably higher collision probability of CSAT than LBE for the cases evaluated.  Note also that LTE throughput is reduced compared to Fig.~\ref{fig:throughput_prop_fair} for LBE when $n=9$ due to the increase of the collision probability with the number of WiFi 
stations.  Although the throughput in this case is not as reduced as with CSAT, the performance degradation is considerable. One way to alleviate this in LBE might be for the LTE to transmit before the DIFS to allow WiFi stations to decode the CTS-to-self, similarly to the approach described in~\cite{valls-lteu-wifi}, but this is outside the scope of the present paper.





\section{Scope}\label{sec:scope}

In our analysis we have made a number of assumptions, many of which can be fairly readily relaxed.  
%
%
\emph{(i) Loss-free channel}: Extension of our model to include channel losses is straightforward.  Namely, by {reducing} the success probability with a packet loss probability.  \emph{(ii) Multiple Channels/Channel Bonding}:  Both the scheduled and the random access networks may in general transmit across multiple channels.   However, provided they occupy disjoint channels, we can solve the allocation problem separately for each set of channels using the model in Section \ref{sec:appendix_model}.  That is, although we focus on a single channel here, the generalisation to multiple channels is immediate. \emph{(iii) Unsaturated Scheduled Network:} Extension of the analysis in Section \ref{sec:unsat} to the case in which the scheduled network is not saturated is straightforward if we ignore buffer dynamics and consider both $T_{\rm on}$ and $T_{\rm off}$ as random variables bounded by the offered load. 
    
We have also made a number of assumptions which are less easy to relax.  \emph{(i) Completely Overlapping Channels}:  We have considered that the channel widths used by the coexisting networks completely overlap. The extension to smaller channel widths is not straightforward as it is not clear the level of interference that each technology will cause to one another when using heterogeneous and partially overlapping channel widths. Refer to~\cite{jian2015coexistence} for the case of unlicensed LTE and WiFi.  \emph{(ii) Capture}: Our model assumes that concurrent transmissions result in a collision and the inability of the receiver to decode the message.   The main difficulty with including the capture effect in our analysis (where some receivers may successfully decode a colliding transmission) lies in specifying a suitable physical layer model and so we leave this for future work. \emph{(iii) Hidden Terminals}:  Perhaps the most significant omission from our analysis is hidden terminals.     The basic difficulties here arise from the fact that  hidden terminals can start transmitting even when a transmission by another station has already been in progress for some time and that the times hidden terminals attempt transmission are coupled to the dynamics of the transmissions they overhear.  We therefore leave consideration of channel allocation with hidden terminals out of the scope of this work.   It is perhaps also worth noting here that the prevalence of severe hidden terminals in real network deployments presently remains unclear.  While it is relatively easy to construct hidden terminal configurations in the lab that exhibit gross unfairness, it may be that such configurations are less common in practice. \emph{(iv) Multiple Scheduled Networks:} We have considered a single scheduled transmitter which may represent the case of multiple coordinated scheduled users/networks. However, the case of uncoordinated scheduled transmitters (as might be the case when they belong to different operators) is challenging as results depend greatly on the extent of the desychronisation of their slot boundaries. Some works to coordinate scheduled transmitters in the context of WiMaX include~\cite{siddique2010spectrum,muhleisen2009ieee}.

\section{Final Remarks}\label{sec:conclusions}

In this paper we address the coexistence of scheduled and random access transmitters in the same frequency channel. We show that there is an inherent cost due to the heterogeneity in channel access approaches. This cost is a per-transmission one and can thus be alleviated by increasing the duration of scheduled transmissions at the expense of increasing the variability of the delay for random access transmissions. We derive the joint proportional fair rate allocation and demonstrate that in this the heterogeneity cost is accounted for the channel airtime of the scheduled transmissions while the inefficiency of random access is accounted for in the channel airtime of the random access network.  We extend this analysis to consider unsaturated random access stations as well as imperfect inter-technology detection.  We illustrate the application of our analysis to the fair coexistence of LTE and WiFi.   Importantly, we show that, when optimally configured, both CSAT and LBT/LBE, result in the same throughput to WiFi, providing significant new insight on the current controversy on their ability to provide fairness to WiFi. We also show that, in certain circumstances, the heterogeneity cost is higher for CSAT, and thus the resulting LTE throughput is lower when compared to use of LBT/LBE. We also show that in the case of imperfect inter-technology detection, the use of explicit communication is more problematic in CSAT due to the generally higher probability of loss compared to LBT/LBE.

\bibliographystyle{IEEEtran}

\bibliography{references}

\begin{thebibliography}{10}
\providecommand{\url}[1]{#1}
\csname url@rmstyle\endcsname
\providecommand{\newblock}{\relax}
\providecommand{\bibinfo}[2]{#2}
\providecommand\BIBentrySTDinterwordspacing{\spaceskip=0pt\relax}
\providecommand\BIBentryALTinterwordstretchfactor{4}
\providecommand\BIBentryALTinterwordspacing{\spaceskip=\fontdimen2\font plus
\BIBentryALTinterwordstretchfactor\fontdimen3\font minus
  \fontdimen4\font\relax}
\providecommand\BIBforeignlanguage[2]{{%
\expandafter\ifx\csname l@#1\endcsname\relax
\typeout{** WARNING: IEEEtran.bst: No hyphenation pattern has been}%
\typeout{** loaded for the language `#1'. Using the pattern for}%
\typeout{** the default language instead.}%
\else
\language=\csname l@#1\endcsname
\fi
#2}}

\bibitem{flore2014slides}
D.~Flore, ``{Chairman Summary},'' \emph{3GPP workshop on LTE in unlicensed
  spectrum}, June, 2014.

\bibitem{rahman2011license}
M.~I. Rahman, A.~Behravan, H.~Koorapaty, J.~Sachs, and K.~Balachandran,
  ``{License-exempt LTE systems for secondary spectrum usage: scenarios and
  first assessment},'' in \emph{IEEE Symposium on New Frontiers in Dynamic
  Spectrum Access Networks (DySPAN)}, 2011, pp. 349--358.

\bibitem{qualcomm2014whitepapers}
{Qualcomm Technologies, Inc.}, ``{LTE in Unlicensed Spectrum: Harmonious
  Coexistence with Wi-Fi},'' \emph{Whitepaper}, June, 2014.

\bibitem{zhang2011enabling}
X.~Zhang and K.~G. Shin, ``{Enabling coexistence of heterogeneous wireless
  systems: case for ZigBee and WiFi},'' in \emph{ACM International Symposium on
  Mobile Ad Hoc Networking and Computing (MobiHoc)}, 2011, p.~6.

\bibitem{berlemann2006unlicensed}
L.~Berlemann, C.~Hoymann, G.~Hiertz, and B.~Walke, ``{Unlicensed operation of
  IEEE 802.16: Coexistence with 802.11 (a) in shared frequency bands},'' in
  \emph{IEEE 17th International Symposium on Personal, Indoor and Mobile Radio
  Communications (PIMRC)}, 2006, pp. 1--5.

\bibitem{kim2011coexistence}
J.~Kim, S.~Park, S.~H. Rhee, Y.-H. Choi, Y.-u. Chung, and H.~Y. Hwang,
  ``{Coexistence of WiFi and WiMAX systems based on PS-request protocols},''
  \emph{Sensors}, vol.~11, no.~10, pp. 9700--9716, 2011.

\bibitem{bian2014addressing}
K.~Bian, J.-M. Park, L.~Chen, and X.~Li, ``{Addressing the hidden terminal
  problem for heterogeneous coexistence between TDM and CSMA networks in white
  space},'' \emph{IEEE Transactions on Vehicular Technology}, vol.~63, no.~9,
  pp. 4450--4463, 2014.

\bibitem{IEEE802154e}
{IEEE 802.15.4e}, ``{IEEE Draft Standard for Local and Metropolitan Area
  Networks. Part 15.4: Low-Rate Wireless Personal Area Networks (LR-WPANs).}''
  April 10 2015.

\bibitem{IEEE80211ah}
{IEEE P802.11-Task Group AH}, ``{Proposed TGah Draft Amendment},'' 2013.

\bibitem{palattella2016internet}
M.~Palattella, M.~Dohler, A.~Grieco, G.~Rizzo, J.~Torsner, T.~Engel, and
  L.~Ladid, ``Internet of things in the 5g era: Enablers, architecture and
  business models,'' \emph{IEEE Journal on Selected Areas in Communications
  (JSAC)}, 2016.

\bibitem{friedman2013power}
R.~Friedman, A.~Kogan, and Y.~Krivolapov, ``{On power and throughput tradeoffs
  of wifi and bluetooth in smartphones},'' \emph{IEEE Transactions on Mobile
  Computing}, vol.~12, no.~7, pp. 1363--1376, 2013.

\bibitem{5672592}
P.~Yi, A.~Iwayemi, and C.~Zhou, ``{Developing ZigBee Deployment Guideline Under
  WiFi Interference for Smart Grid Applications},'' \emph{IEEE Transactions on
  Smart Grid}, vol.~2, no.~1, pp. 110--120, March 2011.

\bibitem{tytgat2012avoiding}
L.~Tytgat, O.~Yaron, S.~Pollin, I.~Moerman, and P.~Demeester, ``{Avoiding
  collisions between IEEE 802.11 and IEEE 802.15. 4 through coexistence aware
  clear channel assessment},'' \emph{EURASIP Journal on Wireless Communications
  and Networking}, vol. 2012, no.~1, pp. 1--15, 2012.

\bibitem{cavalcante2013performance}
A.~M. Cavalcante, E.~Almeida, R.~D. Vieira, S.~Choudhury, E.~Tuomaala,
  K.~Doppler, F.~Chaves, R.~C. Paiva, and F.~Abinader, ``{Performance
  evaluation of LTE and Wi-Fi coexistence in unlicensed bands},'' in \emph{IEEE
  Vehicular Technology Conference (VTC Spring)}, 2013, pp. 1--6.

\bibitem{ltevswifi-01}
A.~Babaei, J.~Andreoli-Fang, and B.~Hamzeh, ``{{On the impact of LTE-U on Wi-Fi
  performance}},'' in \emph{2014 IEEE 25th Annual International Symposium on
  Personal, Indoor, and Mobile Radio Communication (PIMRC)}, Sept 2014, pp.
  1621--1625.

\bibitem{lteu-exps}
Y.~Jian, C.~F. Shih, B.~Krishnaswamy, and R.~Sivakumar, ``{Coexistence of Wi-Fi
  and LAA-LTE: Experimental evaluation, analysis and insights},'' in \emph{IEEE
  International Conference on Communication Workshop (ICCW)}, June 2015, pp.
  2325--2331.

\bibitem{3gppstudy}
3rd Generation Partnership~Project, ``{3GPP TR 36.889 v13.0.0 - Study on
  Licensed-Assisted Access to Unlicensed Spectrum (Release 13)}.''

\bibitem{nokia}
``{{Nokia LTE for unlicensed spectrum}},'' White paper, Nokia., 2014.

\bibitem{qualcomm}
``{{Extending the Benefits of LTE-A to Unlicensed Spectrum}},'' White paper,
  Qualcomm., 2014.

\bibitem{huawei}
``{{U-LTE: Unlicensed Spectrum Utilization of LTE}},'' White paper, Huawei.,
  2014.

\bibitem{ning2012unlicensed}
Z.~Ning, M.~Saisai, X.~Jing, Z.~Bin, and Z.~Wei, ``{Unlicensed Spectrum Usage
  Method for Cellular Communication Systems},'' in \emph{International
  Conference on Wireless Communications, Networking and Mobile Computing
  (WiCOM)}, 2012, pp. 1--6.

\bibitem{hajmohammad2013unlicensed}
S.~Hajmohammad and H.~Elbiaze, ``{Unlicensed spectrum splitting between
  Femtocell and WiFi},'' in \emph{IEEE International Conference on
  Communications (ICC)}, 2013, pp. 1883--1888.

\bibitem{liu2014small}
F.~Liu, E.~Erkip, M.~Beluri, R.~Yang, \emph{et~al.}, ``{Small Cell Traffic
  Balancing Over Licensed and Unlicensed Bands},'' \emph{IEEE Transactions on
  Vehicular Technology}, 2014.

\bibitem{ccano-icc}
C.~Cano and D.~J. Leith, ``{Coexistence of WiFi and LTE in unlicensed bands: A
  proportional fair allocation scheme},'' in \emph{IEEE International
  Conference on Communication Workshop (ICCW)}, June 2015, pp. 2288--2293.

\bibitem{guan4cu}
Z.~Guan and T.~Melodia, ``{CU-LTE: Spectrally-Efficient and Fair Coexistence
  Between LTE and Wi-Fi in Unlicensed Bands},'' \emph{IEEE International
  Conference on Computer Communications (INFOCOM)}, 2016.

\bibitem{ccano-icc-2016}
C.~Cano and D.~J. Leith, ``{Unlicensed LTE/WiFi Coexistence: Is LBT Inherently
  Fairer Than CSAT?}'' in \emph{IEEE International Conference on Communication
  (ICC)}, 2016.

\bibitem{baccelli2006role}
F.~Baccelli, S.~Machiraju, D.~Veitch, and J.~C. Bolot, ``{The role of PASTA in
  network measurement},'' in \emph{ACM SIGCOMM Computer Communication Review},
  vol.~36, no.~4, 2006.

\bibitem{IEEE80211ac}
{IEEE Std 802.11ac}, ``{Wireless} {LAN} {Medium} {Access} {Control} {(MAC)} and
  {Physical} {Layer} {(PHY)} {Specifications}. {Amendement 4: Enhancements for
  Very High Throughput for Operation in Bands below 6 GHz},'' \emph{ANSI/IEEE
  Std 802.11ac}, 2013.

\bibitem{checco2011proportional}
A.~Checco and D.~J. Leith, ``{Proportional fairness in 802.11 wireless LANs},''
  \emph{IEEE Communication Letters}, vol.~15, no.~8, pp. 807--809, 2011.

\bibitem{yang2003delay}
Y.~Yang and T.-S.~P. Yum, ``{Delay distributions of slotted ALOHA and CSMA},''
  \emph{IEEE Transactions on Communications}, vol.~51, no.~11, 2003.

\bibitem{2012_garcia-saavedra_infocom_ados}
A.~Garcia-Saavedra, A.~Banchs, P.~Serrano, and J.~Widmer, ``Distributed
  opportunistic scheduling: A control theoretic approach,'' \emph{IEEE
  International Conference on Computer Communications (INFOCOM)}, pp. 540--548,
  March 2012.

\bibitem{subramanian2012convexity}
V.~G. Subramanian and D.~J. Leith, ``{Convexity Conditions for 802.11 WLANs},''
  \emph{IEEE Transactions on Information Theory}, 2013.

\bibitem{bellalta2005simple}
B.~Bellalta, M.~Oliver, M.~Meo, and M.~Guerrero, ``{A simple model of the IEEE
  802.11 MAC protocol with heterogeneous traffic flows},'' in \emph{IEEE
  Eurocon}, 2005.

\bibitem{sadek2015extending}
A.~K. Sadek, T.~Kadous, K.~Tang, H.~Lee, and M.~Fan, ``{Extending LTE to
  unlicensed band-Merit and coexistence},'' in \emph{IEEE International
  Conference on Communication Workshop (ICCW)}, 2015.

\bibitem{regmon}
T.~H{\"u}hn, ``{{A Measurement-Based Joint Power and Rate Controller for IEEE
  802.11 Networks}},'' Ph.D. dissertation, Oxford Brookes University, 2013.

\bibitem{valls-lteu-wifi}
V.~Valls, A.~Garcia-Saavedra, X.~Costa-Perez, and D.~J. Leith, ``{Maximising
  LTE Capacity in Unlicensed Bands (LTE-U/LAA) while Fairly Coexisting with
  802.11 WLANs},'' \emph{IEEE Communication Letters. To appear.}

\bibitem{jian2015coexistence}
Y.~Jian, C.-F. Shih, B.~Krishnaswamy, and R.~Sivakumar, ``{Coexistence of Wi-Fi
  and LAA-LTE: Experimental evaluation, analysis and insights},'' in \emph{IEEE
  International Conference on Communication Workshop (ICCW)}, 2015.

\bibitem{siddique2010spectrum}
M.~M. Siddique, B.-L. Wenning, C.~G{\"o}rg, and M.~Muehleisen, ``{Spectrum
  sharing between IEEE 802.16 and IEEE 802.11 based wireless networks},'' in
  \emph{IEEE International Symposium on a World of Wireless Mobile and
  Multimedia Networks (WoWMoM)}, 2010, pp. 1--6.

\bibitem{muhleisen2009ieee}
M.~M{\"u}hleisen, R.~Jennen, M.~M. Siddique, and C.~G{\"o}rg, ``{IEEE 802.16
  coexistence through regular channel occupation},'' in \emph{European Wireless
  Conference (EW)}, 2009, pp. 211--215.

\end{thebibliography}

\end{document}